\definecolor{darkgray}{RGB}{64,64,64}
\definecolor{litegray}{RGB}{192,192,192}
\definecolor{green}{HTML}{0F9D58}
\definecolor{red}{HTML}{DB4437}
\tikzstyle{vertex}=[circle, draw, fill=red, inner sep=0pt, minimum width=5pt]
\tikzstyle{vtx}=[circle, draw, fill=litegray, inner sep=0pt, minimum width=5pt]
\newtheorem{theorem}{Theorem}[section]
\newtheorem{proposition}[theorem]{Proposition}
\newtheorem{corollary}[theorem]{Corollary}
\newtheorem*{minimax}{The minimax theorem (von Neumann~\cite{vN})}
\theoremstyle{definition}
\newtheorem{definition}[theorem]{Definition}
\newtheorem{example}[theorem]{Example}
\newtheorem{remark}[theorem]{Remark}
\theoremstyle{remark}
\newtheorem*{claim*}{Claim}
\newcommand{\p}{\mathbf{p}}
\newcommand{\q}{\mathbf{q}}
\newcommand{\z}{\mathbf{z}}
\newcommand{\bb}{\mathbf{b}}
\newcommand{\rr}{\mathbf{r}}
\newcommand{\cc}{\mathbf{c}}
\newcommand{\mix}{\mathrm{mix}}
\newcommand{\down}{\mathrm{down}}
\newcommand{\cl}{\mathrm{cl}}
\newcommand{\BB}{\mathcal{B}}
\newcommand{\FF}{\mathcal{F}}
\newcommand{\EE}{\mathcal{E}}
\newcommand{\LL}{\mathcal{L}}
\newcommand{\MM}{\mathcal{M}}
\newcommand{\XX}{\mathcal{X}}
\newcommand{\R}{\mathbb{R}}
\newcommand{\Q}{\mathbb{Q}}
\newcommand{\E}{\mathbb{E}}
\newcommand{\N}{\mathbb{N}}
\newcommand{\ubar}[1]{\underaccent{\bar}{#1}}
\newcommand{\ind}{\mathbbm{1}}
\title{The minimax property in infinite two-person win-lose games}
\author{
  Ron Holzman\footnote{Department of Mathematics, Technion -- Israel Institute of Technology, Technion City, Haifa 3200003, Israel. Email: {\tt holzman@technion.ac.il}.}
}
\date{}
\begin{document}

\maketitle

\begin{abstract}
We explore a version of the minimax theorem for two-person win-lose games with infinitely many pure strategies. In the countable case, we give a combinatorial condition on the game which implies the minimax property. In the general case, we prove that a game satisfies the minimax property along with all its subgames if and only if none of its subgames is isomorphic to the ``larger number game." This generalizes a recent theorem of Hanneke, Livni and Moran. We also propose several applications of our results outside of game theory.

\vspace{10pt}

\textbf{Keywords:} minimax theorem, infinite games, larger number game, convergent series, infinite matrices, hypergraph duality, stable theories, online learning.
\end{abstract}

\section{Introduction} \label{sec:introduction}

The minimax theorem is historically considered to be the starting point of game theory, and plays a fundamental role in it. Its standard formulation is for \emph{finite two-person zero-sum games}. In such a game, written as $G = (S,T,\pi)$, players $1$ and $2$ simultaneously choose their pure strategies $s \in S$ and $t \in T$, respectively, where $S$ and $T$ are finite, non-empty sets. The function $\pi: S \times T \to \R$ determines the payoffs: $\pi(s,t)$ to player $1$ and $-\pi(s,t)$ to player $2$. In the \emph{mixed extension} $G^{\mix} = (\Delta(S),\Delta(T),\pi^{\mix})$ of the game $G$, players $1$ and $2$ simultaneously choose their mixed strategies $\p \in \Delta(S)$ and $\q \in \Delta(T)$, respectively, where $\Delta(X)$ is the set of probability distributions over $X$. The payoff function $\pi^{\mix}: \Delta(S) \times \Delta(T) \to \R$ extends $\pi$ by taking the expectation: $\pi^{\mix}(\p,\q) = \E_{(s,t) \sim \p \times \q} \pi(s,t)$.

\begin{minimax}
Let $G = (S,T,\pi)$ be a finite two-person zero-sum game. Then in its mixed extension $G^{\mix} = (\Delta(S),\Delta(T),\pi^{\mix})$ we have
\begin{equation} \label{eq1}
\max_{\p \in \Delta(S)} \min_{\q \in \Delta(T)} \pi^{\mix}(\p,\q) = \min_{\q \in \Delta(T)} \max_{\p \in \Delta(S)} \pi^{\mix}(\p,\q).
\end{equation}
\end{minimax}

The left-hand side of equation~(\ref{eq1}) is what player $1$ can expect to get in a scenario where player $1$ first commits to a mixed strategy $\p$ and then player $2$, knowing $\p$ (but not its realization) chooses a best response to it. The right-hand side is what player $1$ can expect to get in the opposite scenario, where player $2$ first commits to a mixed strategy $\q$ and then player $1$ best responds to it. In the pure-strategy setting, the two scenarios typically lead to different outcomes: knowing the strategy of one's opponent before choosing one's own strategy is a clear advantage. The significance of the minimax theorem is that it shows that this advantage disappears when the players have access to mixed strategies (i.e., flipping coins to choose their pure strategy). In the mixed extension, the two opposite scenarios lead to the same expected payoff $v$ to player $1$, namely the common value of the two sides of equation~(\ref{eq1}). This $v$ is called the \emph{value} of $G$ in mixed strategies. It is a payoff level $v$ so that each player can unilaterally guarantee that the expected outcome will be (from that player's perspective) at least as good as $v$, regardless of the opponent's choice of strategy.

Actually, von Neumann proved a more general version of the minimax theorem: while the formulation above looks at a bilinear function $\pi^{\mix}: \Delta(S) \times \Delta(T) \to \R$, he considered a function $f: X \times Y \to \R$ where $X$ and $Y$ are compact and convex subsets of Euclidean spaces, and $f$ is continuous, concave in the first variable and convex in the second. There is a large body of literature extending the minimax theorem in this form, by allowing $X$ and $Y$ to be subsets of infinite-dimensional spaces and/or relaxing some of the compactness, continuity or concavity/convexity conditions. See, e.g., Fan~\cite{fan}, Sion~\cite{sion} or the survey by Simons~\cite{simons}. Another line of literature, exemplified by Yanovskaya~\cite{yan}, has considered the mixed extension of games with infinitely many pure strategies, allowing the use of mixed strategies which are finitely additive but not necessarily $\sigma$-additive.

In this paper, we consider games with infinitely many pure strategies, in which every outcome is a win for one player and a loss for the other. Focusing on win-lose games will allow us to find conditions for the minimax theorem to hold, which are based on the combinatorial structure of the payoff function, rather than on continuity and concavity/convexity properties as in most of the literature mentioned above. We will, however, briefly comment on how our results may be extended from win-lose games to real-valued games.

\begin{definition} \label{def:11}
A \emph{two-person win-lose game} is a tuple $G = (S,T,\pi)$, where:
\begin{itemize}
\item $S$ is the set of pure strategies of player $1$,
\item $T$ is the set of pure strategies of player $2$,
\item $\pi: S \times T \to \{0,1\}$ is the payoff function.
\end{itemize}
The sets $S,T$ are non-empty and may be infinite. The game $G$ is \emph{countable} if $|S| = |T| = \aleph_0$.
\end{definition}

We will use the term \emph{game} for brevity, to mean a two-person win-lose game. Note that the payoff $1$ stands for a win of player $1$, while the payoff $0$ stands for a win of player $2$ (a choice of $\pm 1$ for these two outcomes would be more symmetric, but less convenient). As is standard for finite games, a game $G$ can be represented by a $0/1$-matrix $M_G$ with rows indexed by $s \in S$, columns indexed by $t \in T$, and $\pi(s,t)$ appearing where row $s$ and column $t$ meet; but such a matrix may have infinitely many rows and columns.

In order to speak about the mixed extension of a game $G$, we need to specify the class of mixed strategies that are allowed. We will work mostly with countable games. In this case, we will naturally call a \emph{mixed strategy} of player $1$ any assignment of probabilities $\p = (p_s)_{s \in S}$ to the pure strategies (equivalently, to the rows), such that $p_s \ge 0$ and $\sum_{s \in S} p_s = 1$, and similarly for player $2$. When $S$ (or $T$) is uncountable, we keep the same definition of a mixed strategy, with the understanding that $p_s > 0$ for at most countably many $s \in S$. (We could allow more general probability distributions over $S$ or $T$, but then the definition would need to cope with measurability issues, and our results would not be affected anyway.)

Thus, for a set $X$ of pure strategies, we let the set $\Delta(X)$ of mixed strategies consist of the probability distributions over $X$ having at most countable support. Given mixed strategies $\p \in \Delta(S)$ and $\q \in \Delta(T)$, we define $\pi^{\mix}(\p,\q) = \E_{(s,t) \sim \p \times \q} \pi(s,t)$. Note that this is the probability that player $1$ wins under $\p \times \q$.

\begin{definition} \label{def:12}
We say that the game $G = (S,T,\pi)$ satisfies the \emph{minimax property} if the following equality holds:
\begin{equation} \label{eq2}
\sup_{\p \in \Delta(S)} \inf_{\q \in \Delta(T)} \pi^{\mix}(\p,\q) = \inf_{\q \in \Delta(T)} \sup_{\p \in \Delta(S)} \pi^{\mix}(\p,\q).
\end{equation}
\end{definition}

Equation~(\ref{eq2}) is the same as equation~(\ref{eq1}), except that we do not insist that the maximum or minimum be attained. For reasons of tradition, we still call this the minimax (rather than the infisup) property. When we can actually assert this equality with $\max$ instead of $\sup$ or $\min$ instead of $\inf$, we will do so. Note that the left-hand side of equation~(\ref{eq2}) is always less than or equal to the right-hand side. Equality means that the game $G$ has an \emph{$\varepsilon$-value} $v$ in mixed strategies, namely, each player can unilaterally guarantee that the expected outcome will be, up to any $\varepsilon > 0$, at least as good as $v$ (from that player's perspective).

Not all games satisfy the minimax property. A standard counterexample is the following.

\begin{example} \label{ex:13}
The \emph{larger number game (LNG)}.
\\In this game, each of the two players chooses a positive integer, and the one who chose the larger number wins. Specifying how ties are broken is immaterial for the analysis, but for concreteness we will assume that they are broken in favor of player $1$. Thus, $G = (\N,\N,\pi)$ where
$$\pi(s,t) = \begin{cases} 1 \quad  & \text{if $s \ge t$},\\ 0 \quad & \text{otherwise}. \end{cases}$$
This game is represented by the following matrix $M_G$:

\vspace{-10pt}

\begin{center}
\begin{tabular}{c | c c c c c c}
& $1$ & $2$ & $3$ & $4$ & $5$ & $\cdots$ \\
\hline
$1$ & $1$ & $0$ & $0$ & $0$ & $0$ & \\
$2$ & $1$ & $1$ & $0$ & $0$ & $0$ & \\
$3$ & $1$ & $1$ & $1$ & $0$ & $0$ & $\cdots$ \\
$4$ & $1$ & $1$ & $1$ & $1$ & $0$ & \\
$5$ & $1$ & $1$ & $1$ & $1$ & $1$ & \\
$\vdots$ & & & $\vdots$ & & & $\ddots$
\end{tabular}
\end{center}

\vspace{10pt}

Intuitively, this game violates the minimax property because it is clearly advantageous to know your opponent's strategy before choosing your own. Formally, the left-hand side of equation~(\ref{eq2}) equals $0$ in this example, because given any $\p = (p_s)_{s \in \N} \in \Delta(\N)$ and $\varepsilon > 0$, player $2$ can hold the winning probability of player $1$ below $\varepsilon$ by choosing a number $t \in \N$ for which $\sum_{s=1}^{t-1} p_s > 1 - \varepsilon$. A symmetric argument shows that the right-hand side of equation~(\ref{eq2}) equals $1$. The gap between the values of the two sides is as large as it can possibly be.
\end{example}

In view of the above example, our goal in this paper is to identify combinatorial conditions on the game $G$ under which the minimax property does hold. Our first main result gives a sufficient condition for a countable game $G$ to satisfy the minimax property. To formulate the condition, we first extend the game $G = (S,T,\pi)$ by adding to the matrix $M_G$ all possible new rows which are less than some old row (in the natural partial order on $\{0,1\}^T$). This adds new pure strategies for player $1$, each of which is weakly dominated by some strategy in $S$. Hence this extension is innocuous in regard to what the players can achieve in the game. Our sufficient condition for $G$ to have the minimax property is: for any infinite sequence $s_1,s_2, \ldots$ of player $1$'s strategies in the extended game, in which $s_i$ weakly dominates $s_j$ for all $i > j$, there is a strategy $s$ that weakly dominates every $s_i$. Note that the LNG game (Example~\ref{ex:13}) violates this condition for the strategy sequence $1,2,\ldots$. In fact, to satisfy the condition, player $1$ would need to have $\infty$ as well in the strategy set. The formal statement of this result, its proof and some consequences appear in Section~\ref{sec:2}.

The above condition is sufficient but not necessary for the game $G$ to satisfy the minimax property. One reason for that is its asymmetric treatment of the two players. Another reason, which pertains also to other conceivable conditions that forbid a certain substructure in the matrix $M_G$, is that a forbidden substructure may appear in one part of the matrix, but $G$ may still satisfy the minimax property due to another part of the matrix. Therefore, it makes more sense to try to characterize those games $G$ which intrinsically satisfy the minimax property, in the sense of the following definition.

\begin{definition} \label{def:14}
Let $G = (S,T,\pi)$ be a game. A \emph{subgame} of $G$ is a game of the form $G' = (S',T',\pi ')$, where $S' \subseteq S, T' \subseteq T$, and $\pi '$ is the restriction of $\pi$ to $S' \times T'$. We say that the game $G$ is \emph{totally minimax} if every subgame $G'$ of $G$ satisfies the minimax property.
\end{definition}

Clearly, for $G$ to be totally minimax, it must not contain LNG (or any game isomorphic to it) as a subgame. Our second main result shows that the absence of any LNG subgame is also a sufficient condition for $G$ to be totally minimax. (This applies to all, not just countable games.) In other words, when $G$ is not totally minimax, this can always be attributed to a simple reason, namely the presence of an LNG subgame. This strengthens a recent result of Hanneke, Livni and Moran~\cite{hlm}, who proved the same under an additional condition on the game $G$: they required that the VC-dimension of the matrix $M_G$ be finite (the VC-dimension is a fundamental complexity measure in statistical learning theory). The characterization result for totally minimax games is formulated and proved in Section~\ref{sec:3}.

The final Section~\ref{sec:4} describes connections and applications of our results to several topics outside of game theory. These include convergence of number series, row and column densities in infinite binary matrices, fractional matchings and covers in infinite hypergraphs, stability in first-order logic, and online learning theory.

\section{A sufficient condition for the minimax property in countable games} \label{sec:2}
Let $G = (S,T,\pi)$ be a game. We assume w.l.o.g.\ that the rows in the matrix $M_G$ are all distinct, i.e., no two strategies of player $1$ are equivalent. Thus, each pure strategy $s \in S$ corresponds bijectively to its row $(\pi(s,t))_{t \in T}$ in $\{0,1\}^T$, or equivalently to the subset $B_s$ of $T$ defined by \[ B_s = \{ t \in T: \pi(s,t) = 1\}.\] In words, for a strategy $s$ of player $1$, $B_s$ is the set of player $2$'s strategies that $s$ beats. Thus, the containment relation $B_s \subset B_{s'}$ corresponds to the game theoretic property that $s$ is weakly dominated by $s'$. We denote the family of all these sets by $\BB^1(G)$, i.e., \[\BB^1(G) = \{B_s: s \in S\}.\] Clearly, the set $T$ and the family of subsets $\BB^1(G)$ fully describe the game $G$. For example, when $G$ is the larger number game, $\BB^1(G)$ is the family $\{\{1\},\{1,2\},\ldots\}$. We also consider the downward closure $\BB^1_{\down}(G)$ of the family $\BB^1(G)$, namely the family \[\BB^1_{\down}(G) = \{ A \subseteq T: \exists s \in S \,\,\text{such that}\, A \subseteq B_s\}.\]

\noindent The family $\BB^1_{\down}(G)$ corresponds to an extension of the game $G$, obtained by adding to player $1$'s strategy set all possible new strategies that are weakly dominated by some old strategy. This extension has no effect on what the players can achieve in the game, but it facilitates the statement of our sufficient condition for the minimax property. This condition is based on the following set theoretic property.

\begin{definition} \label{def:21}
Let $\FF$ be a family of sets. We say that $\FF$ is \emph{ascending-union closed} if \[A_i \in \FF, i = 1,2,\ldots \,\,\text{and}\,\, A_1 \subset A_2 \subset \cdots \subset A_i \subset \cdots \,\, \Longrightarrow \,\, \cup_{i=1}^{\infty} A_i \in \FF.\]
\end{definition}

In words, the definition requires that the family $\FF$ be closed with respect to taking unions of countable ascending chains. Our sufficient condition for the minimax property states that the family $\BB^1_{\down}(G)$ should be ascending-union closed. Recalling the correspondence between containment and domination, we can interpret the condition as follows: given any infinite ascending (in the sense of weak domination) sequence of pure strategies of player $1$ in the extended game, there should be a strategy which weakly dominates all of them.

The main result of this section is the following.

\begin{theorem} \label{thm:22}
Let $G = (S,T,\pi)$ be a countable game, such that the family $\BB^1_{\down}(G)$ is ascending-union closed. Then $G$ satisfies the minimax property in the following form:
\begin{equation} \label{eq3}
\max_{\p \in \Delta(S)} \inf_{\q \in \Delta(T)} \pi^{\mix}(\p,\q) = \inf_{\q \in \Delta(T)} \max_{\p \in \Delta(S)} \pi^{\mix}(\p,\q).
\end{equation}
\end{theorem}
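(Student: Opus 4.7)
The plan is to reduce to showing $\max_{\p} \inf_{\q} \pi^{\mix}(\p,\q) \ge V$ where $V := \inf_{\q \in \Delta(T)} \sup_{s \in S} \q(B_s)$ is the right-hand side (the reverse inequality is immediate), and the attainment of both maxima will fall out of the construction. Equivalently, I want to produce $\p^* \in \Delta(S)$ with $\p^*(A_t) \ge V$ for every column $A_t := \{s \in S : \pi(s,t)=1\}$. I would start from finite approximations: for each finite $T' \subseteq T$, the restricted subgame has only finitely many distinct row-types in $T'$, so after merging equivalent rows and applying the classical finite minimax theorem one obtains a finitely supported $\p_{T'} \in \Delta(S)$ satisfying $\p_{T'}(A_t) \ge V_{T'} \ge V$ for every $t \in T'$, where $V_{T'} \ge V$ follows from $\Delta(T') \subseteq \Delta(T)$. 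Enumerating $T = \{t_1,t_2,\ldots\}$, I set $\p_n := \p_{T_n}$ with $T_n = \{t_1,\ldots,t_n\}$.

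The critical observation is that ascending-union closure of $\BB^1_{\down}(G)$ is exactly equivalent to $\BB^1_{\down}(G)$ being topologically closed in the compact metrizable space $2^T$: any pointwise limit $A$ of a sequence $A_n \in \BB^1_{\down}(G)$ satisfies $A = \liminf_n A_n = \bigcup_N \bigcap_{n \ge N} A_n$; each inner intersection is a subset of $A_N$ and so lies in $\BB^1_{\down}(G)$ by downward closure, and the outer ascending union then lies in $\BB^1_{\down}(G)$ by hypothesis. Pushing each $\p_n$ forward under $s \mapsto B_s$ gives a Borel probability measure on $2^T$ supported in $\BB^1_{\down}(G)$; by Prokhorov's theorem on a compact metric space I extract a weak subsequential limit $\nu$, whose support is automatically in the closed set $\BB^1_{\down}(G)$. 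Since the coordinate function $A \mapsto \mathbf{1}[t \in A]$ is continuous on $2^T$, weak convergence yields $\nu(\{A : t \in A\}) = \lim_k \p_{n_k}(A_t) \ge V$ for every $t$.

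To convert $\nu$ into a strategy in $\Delta(S)$, I would fix a Borel-measurable selector $\sigma : \BB^1_{\down}(G) \to S$ defined by taking $\sigma(A)$ to be the least $s$ (in a fixed enumeration of $S$) with $A \subseteq B_s$. The preimages $\sigma^{-1}(s) = \{A : A \subseteq B_s\} \cap \bigcap_{s' < s}\{A : A \not\subseteq B_{s'}\}$ are Borel, because each $\{A : A \subseteq B_s\} = \bigcap_{t \notin B_s}\{A : t \notin A\}$ is a countable intersection of clopen coordinate sets. Then $\p^* := \sigma_*\nu \in \Delta(S)$, and
\[
\p^*(A_t) \;=\; \nu(\{A : t \in B_{\sigma(A)}\}) \;\ge\; \nu(\{A : t \in A\}) \;\ge\; V,
\]
using $A \subseteq B_{\sigma(A)}$. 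This realizes the $\max$ on the LHS. For the $\max$ on the RHS, fix any $\q \in \Delta(T)$; the function $A \mapsto \q(A) = \sum_t \q(t)\mathbf{1}[t \in A]$ is continuous on $2^T$ (the finite partial sums converge uniformly since $\sum_t \q(t) = 1$), so taking $s_k \in S$ with $\q(B_{s_k}) \to \alpha := \sup_s \q(B_s)$ and passing to any subsequential limit $A^* \in \BB^1_{\down}(G)$ of $(B_{s_k})$ in $2^T$ yields $\q(A^*) = \alpha$; any $s^* \in S$ with $A^* \subseteq B_{s^*}$ then attains $\q(B_{s^*}) = \alpha$.

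The main obstacle is conceptual rather than technical: recognizing that the seemingly combinatorial ascending-union closure condition is literally the topological closedness of $\BB^1_{\down}(G)$ in $2^T$, which is what unlocks Prokhorov's theorem. Once this reformulation is in place, LNG (Example~\ref{ex:13}) is seen to be the prototypical violating case, with $\BB^1_{\down}(G)$ the family of finite subsets of $\N$ whose closure in $2^\N$ includes cofinite sets such as $\N$ itself. The remaining work --- the finite-subgame reduction, the Borel measurability of $\sigma$, and continuity verifications --- is routine.
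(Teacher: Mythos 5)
Your proposal is correct and follows essentially the same route as the paper's proof: the key observation that ascending-union closure of the downward-closed family $\BB^1_{\down}(G)$ makes it a closed (hence compact) subset of $\{0,1\}^{T}$, finite truncations of $T$ plus the finite minimax theorem, a weak subsequential limit via Prokhorov, and a Borel least-index selector $A \mapsto B_s$ with $A \subseteq B_s$ to push the limit measure back into $\Delta(S)$, with the same continuity argument giving attainment of the inner maximum. The only (cosmetic) difference is that you work directly with $V$ equal to the right-hand side rather than introducing the limit $v$ of the truncated game values and showing both sides equal it.
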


\begin{proof} By identifying $T$ with the set $\N$ of positive integers, and each subset $A$ with its characteristic function $\ind
_A: \N \to \{0,1\}$, we view $\BB^1_{\down}(G)$ as a subset of $\{0,1\}^{\N}$. Henceforth, we move freely between sets and their characteristic functions. Equipping $\{0,1\}$ with the discrete topology and $\{0,1\}^{\N}$ with the corresponding product topology, the latter is a compact topological space by Tychonoff's theorem. Moreover, as $\N$ is countable, the space $\{0,1\}^{\N}$ is metrizable. Treating $\BB^1_{\down}(G)$ as a subset of the compact metric space $\{0,1\}^{\N}$ is the key idea of our proof, together with the observation made in the first step below.

\textbf{Step 1} $\BB^1_{\down}(G)$ is a closed subset of $\{0,1\}^{\N}$, hence is itself a compact metric space.

\noindent \emph{Proof.} Let $\z^{(i)} = (z_1^{(i)},z_2^{(i)},\ldots) \in \BB^1_{\down}(G)$, $i = 1,2,\ldots$, and assume that the sequence $\z^{(i)}$ converges to $\z = (z_1,z_2,\ldots) \in \{0,1\}^{\N}$ in the product topology (i.e., pointwise). We have to show that $\z \in \BB^1_{\down}(G)$. Let $A = \{ j \in \N: z_j = 1\}$. If $A$ is finite then for large enough $i$ we have $A \subseteq \{j \in \N: z_j^{(i)} = 1\}$. Since $\z^{(i)}$ is in $\BB^1_{\down}(G)$, so is $\z$, and we are done. Thus we may assume that $A$ is an infinite subset of $\N$, and view it as the ascending union of $A_1 \subset A_2 \subset \cdots \subset A_i \subset \cdots$, where $A_i$ consists of the first $i$ elements of $A$. Repeating the argument just made, we see that each $A_i$ is in $\BB^1_{\down}(G)$. Since $\BB^1_{\down}(G)$ is ascending-union closed, it follows that $A \in \BB^1_{\down}(G)$, as required.

\textbf{Step 2} Let $G_n = (S,T_n,\pi_n)$ be the subgame of $G$ obtained by restricting player $2$ to the first $n$ pure strategies. Then for every $n$, the game $G_n$ has a value $v_n$, and the sequence $(v_n)_{n=1,2,\ldots}$ is non-increasing and converges to some $v \in [0,1]$.

\noindent \emph{Proof.} In $G_n$ player $1$ has at most $2^n$ pure strategies, up to equivalence. Thus $G_n$ is equivalent to a finite game, so by the finite minimax theorem it has a value $v_n \in [0,1]$. Clearly, the sequence $(v_n)_{n=1,2,\ldots}$ is non-increasing, and therefore it converges to some $v \in [0,1]$.

In the following two steps, we show that the two sides of equation~(\ref{eq3}) are well defined, and both are equal to the $v$ of Step~2.

\textbf{Step 3} For each $\q \in \Delta(T)$, $\max_{\p \in \Delta(S)} \pi^{\mix}(\p,\q)$ is attained at some pure strategy in $S$, and we have $\inf_{\q \in \Delta(T)} \max_{\p \in \Delta(S)} \pi^{\mix}(\p,\q) \le v$.

\noindent \emph{Proof.} Fix $\q = (q_j)_{j \in \N} \in \Delta(T)$, and consider the mapping $f: \BB^1_{\down}(G) \to [0,1]$ which maps $\z = (z_j)_{j \in \N}$ to $\sum_{j \in \N} q_j z_j$. It is easy to check that $f$ is continuous with respect to the product topology. By Step~1, $\BB^1_{\down}(G)$ is compact, hence $f$ attains its maximum at some $\z \in \BB^1_{\down}(G)$. From the way that $\BB^1_{\down}(G)$ was derived from $\BB^1(G)$, the maximum of $f$ over $\BB^1(G)$ is the same. This means that there is a pure strategy $s \in S$ which maximizes $\pi^{\mix}(s',\q)$ over all $s' \in S$. Hence this $s \in S$, viewed as a mixed strategy, also maximizes $\pi^{\mix}(\p,\q)$ over all $\p \in \Delta(S)$.

To prove the second part of the statement of Step~3, consider for each $n$ an optimal mixed strategy $\q_n \in \Delta(T_n)$ of player $2$ in $G_n$. This $\q_n$ can also be viewed as a mixed strategy of player $2$ in $G$, and as such it satisfies $\max_{\p \in \Delta(S)} \pi^{\mix}(\p,\q_n) = v_n$. Thus we have \[ \inf_{\q \in \Delta(T)} \max_{\p \in \Delta(S)} \pi^{\mix}(\p,\q) \le \inf_{n \in \N} \max_{\p \in \Delta(S)} \pi^{\mix}(\p,\q_n) = \inf_{n \in \N} v_n = v.\]

\textbf{Step 4} There exists $\p \in \Delta(S)$ such that $\inf_{\q \in \Delta(T)} \pi^{\mix}(\p,\q) \ge v$.

\noindent \emph{Proof.} We start this step by recalling the notion of weak convergence of probability measures and a useful fact about it. Let $X$ be a metric space, and let $(P_n)_{n=1,2,\ldots}$ be an infinite sequence of Borel probability measures on $X$. We say that the sequence \emph{converges weakly} to a Borel probability measure $P$ on $X$ if $\int_X f \, \mathrm{d}P_n \to \int_X f \, \mathrm{d}P$ for every bounded, continuous function $f: X \to \R$. We will use the following fact: if $X$ is compact then every such sequence has a weakly convergent subsequence (this follows from Prokhorov's theorem; see, e.g., Billingsley~\cite{bill}).

Now, consider for each $n$ an optimal mixed strategy $\p_n \in \Delta(S)$ of player $1$ in $G_n$. We view $\p_n$ as a Borel probability measure on $X = \BB^1_{\down}(G)$, whose support is contained in $\BB^1(G)$. By Step~1, $X$ is a compact metric space. By the fact above, we can find a subsequence $(\p_{n_i})_{i=1,2,\ldots}$ that converges weakly to a Borel probability measure $P$ on $X$. Fixing a pure strategy $t \in T$, we consider the characteristic function $\ind_{[z_t = 1]}: X \to \{0,1\}$ of those $\z = (z_1,z_2,\ldots) \in \BB^1_{\down}(G)$ having $z_t = 1$. This function is clearly continuous with respect to the product topology. Hence, by weak convergence, we have
\begin{equation} \label{eq4}
\int_X \ind_{[z_t=1]} \, \mathrm{d}\p_{n_i} \,\, \to \,\, \int_X \ind_{[z_t=1]} \, \mathrm{d}P.
\end{equation}
These integrals compute the probability, under $\p_{n_i}$ and $P$ respectively, of player $1$ winning against the pure strategy $t$ of player $2$. For $i$ large enough so that $t \in T_{n_i}$, we know that this probability, under $\p_{n_i}$, is at least $v_{n_i}$. As $v_{n_i} \to v$, it follows from~(\ref{eq4}) that the probability under $P$ of player $1$ winning against $t$ is at least $v$. This being true for every pure strategy $t \in T$, it is also true for every mixed strategy $\q \in \Delta(T)$.

The above almost establishes the statement of Step~4, except that $P$ is a Borel probability measure on $\BB^1_{\down}(G)$ rather than on $\BB^1(G)$. To correct this, we replace $P$ by a probability measure $\p = (p_{B_s})_{B_s \in \BB^1(G)}$ on $\BB^1(G)$ as follows. We fix an arbitrary enumeration $B_{s_1},B_{s_2},\ldots$ of the sets in $\BB^1(G)$. We define a mapping $f: \BB^1_{\down}(G) \to \BB^1(G)$ by: $f(A) = B_{s_i}$ where $i$ is the least index so that $A \subseteq B_{s_i}$. Note that $f$ is Borel measurable, because \[ f^{-1}(B_{s_i}) = \{ A \subseteq \N: A \subseteq B_{s_i}\,\, \text{and}\, A \nsubseteq B_{s_j} \,\, \text{for all } j < i\},\] and for each $s \in S$, the set $\{ A \subseteq \N: A \subseteq B_s\}$ is closed in the product topology. This allows us to define the desired $\p$ by $p_{B_s} = P[f^{-1}(B_s)]$ for each $B_s \in \BB^1(G)$. This $\p$ can be viewed as a mixed strategy in $\Delta(S)$. As $A \subseteq B_s$ for every $A \in f^{-1}(B_s)$, player $1$'s winning probability cannot decrease when replacing $P$ by $\p$. Thus, $\p$ satisfies the statement of Step~4.

Steps~3 and 4 together yield the equality~(\ref{eq3}). \end{proof}

A few remarks about Theorem~\ref{thm:22} are in order.

\begin{remark} \label{com:23}
The condition that the family $\BB^1_{\down}(G)$ be ascending-union closed is sufficient but not necessary for the minimax property to hold. As an example, we can extend the LNG game (Example~\ref{ex:13}) by allowing player $2$ to choose $\infty$, which beats all of player $1$'s strategies. After this extension, $\BB^1_{\down}(G)$ is still not ascending-union closed, but the value of the extended game is trivially $0$.
\end{remark}

\begin{remark} \label{com:24}
Requiring that the family $\BB^1(G)$, not $\BB^1_{\down}(G)$, be ascending-union closed is not sufficient for the minimax property to hold. As an example, let $T$ be the disjoint union of two copies $C_1$ and $C_2$ of $\N$. The sets $B_s$ in $\BB^1(G)$ are of one of two forms: either $B_s$ consists of the first $n$ elements of $C_1$ and the $n$-th element of $C_2$ for some $n \in \N$, or vice versa. There are no containments between the sets in $\BB^1(G)$, hence $\BB^1(G)$ is vacuously ascending-union closed. Yet the minimax property does not hold. On the one hand, no mixed strategy $\p$ of player $1$ guarantees a positive winning probability: this is the case in any countable game where all sets $B_s \in \BB^1(G)$ are finite, by essentially the same argument as in Example~\ref{ex:13}. On the other hand, given any mixed strategy $\q$ of player $2$ and $\varepsilon > 0$, player $1$ has a pure strategy $s$ which wins with probability at least $\frac{1}{2} - \varepsilon$. Indeed, we may assume w.l.o.g.\ that $\q$ assigns a probability of at least $\frac{1}{2}$ to $C_1$, hence at least $\frac{1}{2} - \varepsilon$ to the first $n$ elements of $C_1$, for some $n \in \N$. Choosing $s$ such that $B_s$ contains the first $n$ elements of $C_1$ works for player $1$.
\end{remark}

\begin{remark} \label{com:25}
The infimum over $\q \in \Delta(T)$, on either side of equation~(\ref{eq3}), cannot be replaced by a minimum. As an example, consider the following game $G$, an infinite version of matching pennies. Each player chooses a positive integer, and player $1$ wins if and only if they chose the same number. This game is represented by a diagonal matrix $M_G$ with $1$'s on the diagonal. The family $\BB^1_{\down}(G)$ consists of the singletons and the empty set, and is vacuously ascending-union closed. The $\varepsilon$-value of $G$ is clearly $0$, but player $2$ has no exactly optimal mixed strategy, nor an exactly best response to a full-support mixed strategy of player $1$.
\end{remark}

\begin{remark} \label{com:26}
The proof of Step~3 above shows that player $2$ can achieve arbitrarily good approximations of $v$ by using the mixed strategies $\q_n$, which have finite support and w.l.o.g.\ assign rational probabilities to the pure strategies. The same is true of player $1$: arbitrarily good approximations of $v$ can be achieved by starting from the mixed strategy $\p$ obtained in Step~4, and making small enough perturbations which render the support finite and the assigned probabilities rational. However, \emph{exact} optimality for player $1$ cannot always be achieved under these restrictions. By translating Example~4.1 of Aharoni and Holzman~\cite{ah} to the present context, one obtains a game $G$ satisfying the condition of Theorem~\ref{thm:22}, in which player $1$ has a unique optimal mixed strategy $\p$, that has full support and assigns irrational probabilities to some pure strategies. (See Subsection~\ref{sec:43} below on how to translate the terminology of~\cite{ah} to ours.)
\end{remark}

\begin{remark} \label{com:27}
Theorem~\ref{thm:22} may be extended beyond the class of win-lose games, at the cost of losing the combinatorial/set theoretic nature of its condition. Instead of requiring the range of the payoff function $\pi$ to be $\{0,1\}$, we may allow it to be a bounded subset of $\R$, say the interval $[0,1]$. The discrete topology on $\{0,1\}$ is replaced by the standard topology on $[0,1]$. Instead of the combinatorial condition on $\BB^1_{\down}(G)$ in Theorem~\ref{thm:22}, which is used in the proof to show that it is a closed subset of $\{0,1\}^{\N}$, we may require directly that the analog of $\BB^1_{\down}(G)$ be a closed subset of $[0,1]^{\N}$. The rest of the proof works as above, with minor adaptations. We omit the details.
\end{remark}

In the special case when all sets in $\BB^1(G)$ are finite (equivalently, every row of the matrix $M_G$ has finitely many $1$'s), the condition of Theorem~\ref{thm:22} is directly related to the absence of LNG subgames in $G$ (recall Example~\ref{ex:13} and Definition~\ref{def:14}). We show this after introducing the following definition.

\begin{definition} \label{def:28}
Let $G = (S,T,\pi)$ be a game. We say that $G$ is \emph{LNG-free} if no subgame of $G$ is isomorphic to the larger number game (LNG).
\end{definition}

A subgame of $G$ which is isomorphic to LNG appears in the matrix $M_G$ as a submatrix which, possibly after reordering its rows and columns, has the full lower-triangular form shown in Example~\ref{ex:13}. It may be checked that the notion of LNG-freeness is invariant to changes in the tie-breaking rule used in LNG.

\begin{proposition} \label{pro:29}
Let $G = (S,T,\pi)$ be a game such that all sets in the family $\BB^1(G)$ are finite. Then the following three conditions are equivalent:
\begin{enumerate}
\item $\BB^1_{\down}(G)$ is ascending-union closed.
\item $\BB^1_{\down}(G)$ contains no countable ascending chain $A_1 \subset A_2 \subset \cdots \subset A_i \subset \cdots$.
\item $G$ is LNG-free.
\end{enumerate}
\end{proposition}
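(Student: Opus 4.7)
The plan is to exploit the hypothesis that every $B_s$—and hence every element of $\BB^1_{\down}(G)$—is finite, which collapses the ascending-union closure condition to the non-existence of strictly ascending chains in $\BB^1_{\down}(G)$, and then to translate any such chain into an LNG subgame. The equivalence (1) $\Leftrightarrow$ (2) follows at once from this finiteness: the union of a strictly ascending chain $A_1 \subset A_2 \subset \cdots$ of finite sets is infinite (each proper inclusion adds at least one new element), while every member of $\BB^1_{\down}(G)$ is contained in some finite $B_s$, so (1) forces no such chain to exist, which is (2); conversely, if (2) holds the definition of ascending-union closed is vacuously satisfied.

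For (2) $\Rightarrow$ (3) I would argue by contrapositive: a subgame isomorphic to LNG provides strategies $s_1,s_2,\ldots \in S$ and $t_1,t_2,\ldots \in T$ with $\pi(s_i,t_j)=1$ iff $i \ge j$, so the sets $A_i = \{t_1,\ldots,t_i\}$ lie in $\BB^1_{\down}(G)$ (because $A_i \subseteq B_{s_i}$) and form a strict ascending chain, contradicting (2).

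The substantive direction, and the main obstacle, is (3) $\Rightarrow$ (2). Again by contrapositive, I would take an ascending chain $A_1 \subset A_2 \subset \cdots$ in $\BB^1_{\down}(G)$ with witnesses $A_i \subseteq B_{s_i}$ and inductively build indices $i_1 < i_2 < \cdots$ together with elements $t_k \in T$ such that $t_k \in A_{i_k}$ and $t_k \notin B_{s_{i_j}}$ for all $j < k$. The induction step works because $\bigcup_{j \le k} B_{s_{i_j}}$ is a fixed finite set, whereas $|A_i| \to \infty$ along any strict chain of finite sets, so for sufficiently large $i_{k+1} > i_k$ the set $A_{i_{k+1}}$ cannot be contained in this union and a valid $t_{k+1}$ can be extracted from the difference. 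A direct check then gives $\pi(s_{i_j},t_k) = 1$ iff $j \ge k$: for $j \ge k$ the element $t_k$ lies in $A_{i_k} \subseteq A_{i_j} \subseteq B_{s_{i_j}}$, while for $j < k$ it was chosen outside $B_{s_{i_j}}$. Distinctness of the chosen rows and columns is automatic from these same relations, so the submatrix on $\{s_{i_j}\}$ and $\{t_k\}$ is isomorphic to LNG, contradicting (3). The crux is the unboundedness of $|A_i|$, which is what lets one escape the finite accumulation of $1$'s from the previously chosen rows.
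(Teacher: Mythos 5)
Your proof is correct and follows essentially the same route as the paper's: the finiteness of all members of $\BB^1_{\down}(G)$ collapses (1) to (2), an LNG subgame immediately yields a chain of initial segments for (2)~$\Rightarrow$~(3), and for (3)~$\Rightarrow$~(2) one greedily extracts the LNG submatrix by exploiting that the finitely many accumulated finite sets $B_{s_{i_j}}$ cannot contain the unboundedly growing chain elements. The only cosmetic difference is that you fix witnesses $s_i \supseteq A_i$ in advance and pass to a subsequence, whereas the paper alternately picks $t_k$ from the infinite union $\cup_i A_i$ and then a dominating $s_k$; the underlying counting argument is identical.
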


\begin{proof} (1) $\Rightarrow$ (2) We know that all sets in $\BB^1(G)$, and hence all sets in $\BB^1_{\down}(G)$, are finite. If $\BB^1_{\down}(G)$ does contain a countable ascending chain $A_1 \subset A_2 \subset \cdots \subset A_i \subset \cdots$, then $\cup_{i=1}^{\infty} A_i$ is infinite and therefore not in $\BB^1_{\down}(G)$, so $\BB^1_{\down}(G)$ is not ascending-union closed.

(2) $\Rightarrow$ (3) Suppose that $G' = (S',T',\pi ')$ is a subgame of $G$ which is isomorphic to LNG. Then the sets $\{ t \in T': \pi(s,t) = 1\}$, $s \in S'$, are in $\BB^1_{\down}(G)$ and form a countable ascending chain.

(3) $\Rightarrow$ (1) Supose that $A_1 \subset A_2 \subset \cdots \subset A_i \subset \cdots$ is a countable ascending chain of sets in $\BB^1_{\down}(G)$. We will show that $G$ has a subgame $G' = (S',T',\pi ')$ which is isomorphic to LNG. This requires to construct an infinite sequence of pure strategies, alternating between the two players, of the form $t_1,s_1,t_2,s_2,\ldots,t_k,s_k,\ldots$, in which each strategy beats all the preceding strategies of the opponent. Then, taking $S' = \{s_1,s_2,\ldots\}$ and $T' = \{t_1,t_2,\ldots\}$ we will be done. Our construction of the sequence is inductive, starting with an arbitrary $t_1 \in \cup_{i=1}^{\infty} A_i$ and maintaining the condition that all $t_k$ are in $\cup_{i=1}^{\infty} A_i$. Now suppose that we have constructed the sequence up to a certain point, and we need to find the next strategy to be added. If that strategy is $t_k$, we choose it in $(\cup_{i=1}^{\infty} A_i) \setminus (\cup_{j < k} B_{s_j})$. Note that $\cup_{j < k} B_{s_j}$ is finite by assumption while $\cup_{i=1}^{\infty} A_i$ is infinite, so this choice is feasible. By the definition of the sets $B_{s_j}$, our $t_k$ beats all $s_j$, $j < k$, as required. If the strategy to be added is $s_k$, we look at the set $\{t_1,\ldots,t_k\}$. Each element of this set lies in some $A_i$, and as the $A_i$ form a chain, there exists an $i$ so that $\{t_1,\ldots,t_k\} \subseteq A_i$. Since $A_i \in \BB^1_{\down}(G)$, we can choose $s_k$ so that $A_i \subseteq B_{s_k}$. This $s_k$ beats all $t_j$, $ j \le k$, as required. This completes the inductive construction. \end{proof}

The above proposition leads to the following special case of Theorem~\ref{thm:22}.

\begin{corollary} \label{cor:210}
Let $G = (S,T,\pi)$ be a countable game such that all sets in the family $\BB^1(G)$ are finite. Assume that $G$ satisfies one (hence all) of the equivalent conditions (1)--(3) in Proposition~\ref{pro:29}. Then $G$ satisfies the minimax property in the following form: \[ \max_{\p \in \Delta(S)} \inf_{\q \in \Delta(T)} \pi^{\mix}(\p,\q) = 0 = \inf_{\q \in \Delta(T)} \max_{\p \in \Delta(S)} \pi^{\mix}(\p,\q). \]
\end{corollary}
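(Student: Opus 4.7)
The plan is to deduce the equality from \Cref{thm:22} and then identify the common value as $0$ using the finiteness of the sets $B_s$. Since the hypothesis invokes the equivalent conditions of \Cref{pro:29}, in particular condition (1), the family $\BB^1_{\down}(G)$ is ascending-union closed, so \Cref{thm:22} applies and gives
\[ \max_{\p \in \Delta(S)} \inf_{\q \in \Delta(T)} \pi^{\mix}(\p,\q) = \inf_{\q \in \Delta(T)} \max_{\p \in \Delta(S)} \pi^{\mix}(\p,\q), \]
with the max on the left attained. Once I show that this common value is $0$, I am done.

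The main step is to check that $\inf_{\q \in \Delta(T)} \pi^{\mix}(\p,\q) = 0$ for \emph{every} $\p \in \Delta(S)$, which immediately forces the left-hand side to be $0$ and hence, by the minimax equality, so is the right-hand side. Fix $\p \in \Delta(S)$ and $\varepsilon > 0$. Enumerate the countable support of $\p$ as $s_1, s_2, \ldots$ and choose $N$ large enough that $\sum_{i > N} p_{s_i} < \varepsilon$. Each $B_{s_i}$ is finite by hypothesis, so $U := B_{s_1} \cup \cdots \cup B_{s_N}$ is a finite subset of the countably infinite set $T$. Pick any $t \in T \setminus U$ and let player $2$ play the pure strategy $t$; then
\[ \pi^{\mix}(\p,t) = \sum_{i=1}^{\infty} p_{s_i}\,\ind_{[t \in B_{s_i}]} = \sum_{i > N} p_{s_i}\,\ind_{[t \in B_{s_i}]} \le \sum_{i>N} p_{s_i} < \varepsilon. \]
Letting $\varepsilon \to 0$ gives $\inf_{\q} \pi^{\mix}(\p,\q) = 0$, as required.

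There is no real obstacle here; the argument is essentially the one already used to bound the left-hand side in the LNG example (\Cref{ex:13}), extended from a single ascending chain to an arbitrary mixed strategy with countable support. The only thing to note is that we do not need to separately verify that the maximum on the left is attained, since every $\p \in \Delta(S)$ achieves the value $0$ on that side; the attainment is in any case already guaranteed by \Cref{thm:22}.
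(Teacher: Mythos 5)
Your proposal is correct and follows the paper's own route: invoke \Cref{thm:22} via condition (1) of \Cref{pro:29}, then observe that finiteness of all $B_s$ forces $\inf_{\q}\pi^{\mix}(\p,\q)=0$ for every $\p$ (the paper cites this as the argument from \Cref{ex:13}, already noted in \Cref{com:24}, and you have simply written it out in full). No issues.
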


\begin{proof} This follows immediately from Theorem~\ref{thm:22}, plus the already-made observation that the left-hand side of equality~(\ref{eq3}) equals $0$ when all sets in $\BB^1(G)$ are finite. \end{proof}

The condition in Theorem~\ref{thm:22} focuses on player $1$'s strategies. We can interchange the roles of the players and get a dual result which focuses on player $2$'s strategies. We briefly go through the dual notations.

Let $G = (S,T,\pi)$ be a game. We assume w.l.o.g.\ that the columns in the matrix $M_G$ are all distinct. Thus, each pure strategy $t \in T$ corresponds bijectively to its column $(\pi(s,t))_{s \in S}$ in $\{0,1\}^S$, or equivalently to the subset $B_t$ of $S$ defined by \[B_t = \{ s \in S: \pi(s,t) = 0\}.\] We denote the family of all these sets by $\BB^2(G)$, i.e., \[\BB^2(G) = \{B_t: t \in T\}.\] For example, when $G$ is the larger number game, $\BB^2(G)$ is the family $\{\emptyset, \{1\},\{1,2\},\ldots\}$. We also consider the downward closure $\BB^2_{\down}(G)$ of the family $\BB^2(G)$, namely the family \[\BB^2_{\down}(G) = \{ A \subseteq S: \exists t \in T \,\,\text{such that}\, A \subseteq B_t\}.\] The dual result to Theorem~\ref{thm:22} is the following.

\begin{theorem} \label{thm:211}
Let $G = (S,T,\pi)$ be a countable game, such that the family $\BB^2_{\down}(G)$ is ascending-union closed. Then $G$ satisfies the minimax property in the following form: \[ \sup_{\p \in \Delta(S)} \min_{\q \in \Delta(T)} \pi^{\mix}(\p,\q) = \min_{\q \in \Delta(T)} \sup_{\p \in \Delta(S)} \pi^{\mix}(\p,\q).\]
\end{theorem}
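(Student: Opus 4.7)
The plan is to reduce \cref{thm:211} to \cref{thm:22} via a dual game. Define $G^* = (T,S,\pi^*)$ with $\pi^*(t,s) = 1 - \pi(s,t)$. This is the same game with the roles of the two players swapped: what was a win for player $2$ in $G$ becomes a win for player $1$ in $G^*$. In particular, $G^*$ is also a countable game, and the mixed extensions satisfy the bookkeeping identity $(\pi^*)^{\mix}(\q,\p) = 1 - \pi^{\mix}(\p,\q)$ for any $\p \in \Delta(S)$ and $\q \in \Delta(T)$.

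Next I would verify that the hypothesis carries over: for each $t \in T$ the set of $s \in S$ that $t$ beats in $G^*$ is $\{s : \pi^*(t,s) = 1\} = \{s : \pi(s,t) = 0\}$, which is exactly the set $B_t$ used to define $\BB^2(G)$. Hence $\BB^1(G^*) = \BB^2(G)$ and consequently $\BB^1_{\down}(G^*) = \BB^2_{\down}(G)$. By the hypothesis of \cref{thm:211}, this family is ascending-union closed, so \cref{thm:22} applies to $G^*$ and yields
\[
\max_{\q \in \Delta(T)} \inf_{\p \in \Delta(S)} (\pi^*)^{\mix}(\q,\p) \;=\; \inf_{\p \in \Delta(S)} \max_{\q \in \Delta(T)} (\pi^*)^{\mix}(\q,\p).
\]

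Finally, I would translate this back using $(\pi^*)^{\mix} = 1 - \pi^{\mix}$. Swapping $\inf$ with $\sup$ under the sign change, the left-hand side equals $\max_{\q} \bigl(1 - \sup_{\p} \pi^{\mix}(\p,\q)\bigr) = 1 - \min_{\q} \sup_{\p} \pi^{\mix}(\p,\q)$, and the right-hand side equals $\inf_{\p} \bigl(1 - \min_{\q} \pi^{\mix}(\p,\q)\bigr) = 1 - \sup_{\p} \min_{\q} \pi^{\mix}(\p,\q)$. Cancelling the $1$'s gives precisely the desired equality, with the outer extrema on player $2$'s side attained as minima (inherited from the attained maxima on player $1$'s side in \cref{thm:22}).

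There is no real obstacle here beyond keeping track of which player is which; the hard analytic work — the compactness of the relevant subset of $\{0,1\}^{\N}$, the Prokhorov-style extraction of a weak limit, and the transfer from $\BB^1_{\down}$ back to $\BB^1$ — has already been done inside \cref{thm:22}, and the duality reuses all of it verbatim.
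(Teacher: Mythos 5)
Your proposal is correct and is exactly the dualization the paper intends: the paper states \cref{thm:211} as ``the dual result to Theorem~\ref{thm:22}'' without writing out a proof, precisely because swapping the players via $\pi^* = 1-\pi$, noting $\BB^1(G^*)=\BB^2(G)$, and translating the attained maxima into attained minima is all that is needed. Your bookkeeping of which extrema become genuine minima matches the stated form of the theorem.
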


Dualizing Corollary~\ref{cor:210} we get the following.

\begin{corollary} \label{cor:212}
Let $G = (S,T,\pi)$ be a countable game such that all sets in the family $\BB^2(G)$ are finite. Assume that $G$ is LNG-free. Then $G$ satisfies the minimax property in the following form: \[ \sup_{\p \in \Delta(S)} \min_{\q \in \Delta(T)} \pi^{\mix}(\p,\q) = 1 = \min_{\q \in \Delta(T)} \sup_{\p \in \Delta(S)} \pi^{\mix}(\p,\q).\]
\end{corollary}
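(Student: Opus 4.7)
The plan is to reduce this corollary to results already established. First, I would observe that Proposition~\ref{pro:29} admits a dual form, with $\BB^1(G)$ and $\BB^1_{\down}(G)$ replaced by $\BB^2(G)$ and $\BB^2_{\down}(G)$, while condition~(3) is kept unchanged: the last point works because LNG-freeness is a player-symmetric property (as remarked after Definition~\ref{def:28}). The proof of this dual form is verbatim the proof of Proposition~\ref{pro:29} with the roles of the two players swapped. Under the hypotheses of the corollary, this dual form delivers that $\BB^2_{\down}(G)$ is ascending-union closed.

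Next, I would apply Theorem~\ref{thm:211} to obtain the minimax equality
\[ \sup_{\p \in \Delta(S)} \min_{\q \in \Delta(T)} \pi^{\mix}(\p,\q) = \min_{\q \in \Delta(T)} \sup_{\p \in \Delta(S)} \pi^{\mix}(\p,\q). \]
It only remains to show that the common value is $1$. I would verify this on the right-hand side, by a direct dual of the argument in Example~\ref{ex:13}. Fix $\q = (q_t)_{t \in T} \in \Delta(T)$ and $\varepsilon > 0$. Choose a finite subset $T' \subseteq T$ with $\sum_{t \in T'} q_t \ge 1 - \varepsilon$. By the hypothesis that every set in $\BB^2(G)$ is finite, $\bigcup_{t \in T'} B_t$ is a finite union of finite sets and hence finite; since $S$ is countably infinite, I may pick a pure strategy $s \in S$ lying outside this union. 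Such an $s$ satisfies $\pi(s,t) = 1$ for every $t \in T'$, so playing $s$ as a pure mixed strategy yields $\pi^{\mix}(s,\q) \ge \sum_{t \in T'} q_t \ge 1 - \varepsilon$. Since $\varepsilon$ was arbitrary, $\sup_{\p \in \Delta(S)} \pi^{\mix}(\p,\q) = 1$ for every $\q$, and the claimed common value $1$ then follows from the minimax equality above.

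The only step requiring even mild care is the first one, namely spelling out and justifying the dual of Proposition~\ref{pro:29}; but since the proof of that proposition contains no asymmetry between the players beyond the player-symmetric notion of LNG-freeness, this is essentially automatic. The rest of the argument is a short computation modeled on those already present in Example~\ref{ex:13} and in Corollary~\ref{cor:210}, so I do not expect any substantive obstacle.
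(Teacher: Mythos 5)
Your proposal is correct and follows essentially the same route as the paper, which states this corollary simply as the dualization of Corollary~\ref{cor:210} (via the dual of Proposition~\ref{pro:29}, Theorem~\ref{thm:211}, and the dual of the observation that the value is trivial when the relevant sets are all finite). Your explicit justification of the player-symmetry of LNG-freeness via the tie-breaking remark after Definition~\ref{def:28}, and your direct computation showing $\sup_{\p}\pi^{\mix}(\p,\q)=1$ for every $\q$, are exactly the details the paper leaves implicit.
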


\section{A characterization of totally minimax games} \label{sec:3}
As explained in the introduction, we should not expect a structural characterization of the games that satisfy the minimax property. A more realistic goal, that we achieve in this section, is a characterization of those games which are totally minimax (Definition~\ref{def:14}). Our result shows that the presence of a copy of LNG (Example~\ref{ex:13}) is the only impediment to being totally minimax. This improves an earlier result of Hanneke, Livni and Moran~\cite{hlm}, who proved the same under the additional condition that the VC-dimension of the game be finite. This condition requires the existence of a finite bound on the size of subsets $T' \subseteq T$ so that for every subset $A \subseteq T'$ there exists $s \in S$ for which $\{t \in T': \pi(s,t) = 1\} = A$.

\begin{theorem} \label{thm:31}
Let $G = (S,T,\pi)$ be a game. Then $G$ is totally minimax if and only if $G$ is LNG-free.
\end{theorem}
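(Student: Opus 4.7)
The ``only if'' direction is immediate from Example~\ref{ex:13}: a subgame isomorphic to LNG fails the minimax property, so a game $G$ containing such a subgame is not totally minimax.

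For the ``if'' direction, observe that LNG-freeness passes to subgames, so it suffices to prove that every LNG-free game $G$ satisfies the minimax property itself. The first step reduces to the countable case. If LNG-free $G$ has a gap, fix rationals $a < b$ strictly separating the two sides of~(\ref{eq2}) and build an ascending chain of countable sets $S_0 \subseteq S_1 \subseteq \cdots \subseteq S$ and $T_0 \subseteq T_1 \subseteq \cdots \subseteq T$: at stage $n+1$, for each finitely supported rational-probability $\p$ on $S_n$ adjoin to $T_{n+1}$ the support of some $\q \in \Delta(T)$ witnessing $\pi^{\mix}(\p,\q) \le a$, and dually for $\q$'s on $T_n$. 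The subgame $G^* = (S_\infty,T_\infty,\pi|_{S_\infty \times T_\infty})$ is countable and LNG-free, and approximating arbitrary mixed strategies in total variation by finitely supported rational ones shows that $G^*$ still has the same gap. So it suffices to prove the theorem for countable $G$.

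Assume now that $G = (S,T,\pi)$ is countable and LNG-free. Imitating the proof of Theorem~\ref{thm:22}, let $G_n$ restrict player~$2$ to the first $n$ columns, with value $v_n \downarrow v \in [0,1]$. The inequality $\inf_\q \sup_\p \pi^{\mix}(\p,\q) \le v$ follows exactly as in Step~3 of Theorem~\ref{thm:22}: the supremum over $\Delta(S)$ is attained at some pure strategy by continuity of $\z \mapsto \sum_t q_t z_t$ on the compact space $\{0,1\}^T$, and the optimal $\q_n$ of $G_n$ witness the bound. All the work is in proving the reverse inequality $\sup_\p \inf_\q \pi^{\mix}(\p,\q) \ge v$: for each $\varepsilon > 0$ construct a $\p_\varepsilon \in \Delta(S)$ with $\pi^{\mix}(\p_\varepsilon,t) \ge v - \varepsilon$ for every $t \in T$.

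Following Step~4 of Theorem~\ref{thm:22}, take optimal $\p_n$ for $G_n$, view each as a Borel probability measure on the compact metric space $\{0,1\}^T$ supported on $\BB^1(G)$, and extract a weakly convergent subsequence with limit $P$; applying weak convergence to the continuous indicator $\ind_{[z_t = 1]}$ shows that for every $t$ the $P$-probability of the random row containing $t$ is at least $v$. Under the hypothesis of Theorem~\ref{thm:22}, $P$ is concentrated on $\BB^1_{\down}(G)$ and pulls back to a mixed strategy on $S$, but in general $P$ may place mass on bad limit points $z \in \overline{\BB^1(G)} \setminus \BB^1_{\down}(G)$. The crux is to use LNG-freeness to bypass this. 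The plan is to show that if $A_z = \{t : z_t = 1\}$ is not contained in any $B_s$, then $G$ has an LNG subgame: enumerate $A_z = \{t_1,t_2,\ldots\}$, pick $s_{k_n} \in S$ with $\{t_1,\ldots,t_n\} \subseteq B_{s_{k_n}}$ and first missing coordinate at index $J_n \to \infty$, and apply the infinite Ramsey theorem to the pairs $(m,n)$ colored by the bit $[t_{J_n} \in B_{s_{k_m}}]$. The $0$-homogeneous case produces an LNG subgame directly, contradicting LNG-freeness. The $1$-homogeneous case yields instead an ``inverse matching''-type pattern that must be handled by iterating the extraction or by a symmetric argument on $\BB^2_{\down}(G)$ via Theorem~\ref{thm:211}; organizing these extractions into a measurable-selection argument that absorbs the $P$-generic bad mass, and assembling the resulting local approximations into a single $\p_\varepsilon$ with globally small loss, is the main technical difficulty I would expect to encounter.
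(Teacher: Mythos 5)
Your ``only if'' direction is fine, and your reduction to the countable case is a legitimate variant of the paper's argument (the paper instead builds an alternating sequence of finitely supported optimal strategies for the truncated games and takes the union of their supports; your closure construction achieves the same thing). The steps parallel to Theorem~\ref{thm:22} (the values $v_n \downarrow v$ and the inequality $\inf_{\q}\sup_{\p}\pi^{\mix}(\p,\q) \le v$) are also correct. The gap is exactly where you locate the ``main technical difficulty,'' and the plan you outline there cannot work as stated. Your intended dichotomy rests on the claim that if some $\z \in \cl(\BB^1(G))$ has $A_{\z} = \{t : z_t = 1\}$ contained in no $B_s$, then $G$ contains an LNG subgame. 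This is false: take $T = \N$ and $B_{s_n} = \N \setminus \{n\}$. The rows converge pointwise to the all-ones vector, which is dominated by no row, yet the game is LNG-free (each row has a single $0$, so no row can lose to infinitely many columns, as the first row of an LNG copy must). Your Ramsey extraction reflects this: the $1$-homogeneous outcome is genuinely realizable in LNG-free games and does not reduce to the $0$-homogeneous one by iteration or by a symmetric argument on $\BB^2_{\down}(G)$.

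The missing idea is to weaken the target from pure-strategy domination to approximate mixed-strategy domination. The paper's key step shows: for every $\varepsilon > 0$ and every $\z \in \cl(\BB^1(G))$ there is a finitely supported rational $\p \in \Delta(S)$ with $\pi^{\mix}(\p,t) \ge 1 - \varepsilon$ for all $t$ with $z_t = 1$. This is obtained by considering the subgame $G'$ whose rows are the approximants $\z^{(i)} \to \z$ and whose columns are $A_{\z}$; pointwise convergence means each such column is beaten by all but finitely many rows of $G'$, i.e., all sets in $\BB^2(G')$ are finite, so the dual Corollary~\ref{cor:212} (which needs only LNG-freeness of $G'$) gives $\varepsilon$-value $1$. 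In the example above this is exactly the uniform mixture over $s_1,\ldots,s_N$. One then selects such a $\p$ measurably as a function of $\z$ (enumerate $\Delta^0_{\Q}(S)$ and take the first valid element) and composes this selection with the limit measure $P$ to obtain a mixed strategy guaranteeing $v(1-\varepsilon)$ against every column. Without replacing ``contained in some $B_s$'' by ``$(1-\varepsilon)$-covered by some mixed strategy,'' the mass that $P$ places on $\cl(\BB^1(G)) \setminus \BB^1_{\down}(G)$ cannot be absorbed, and your proof does not close.
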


\begin{proof} As LNG violates the minimax property, the ``only if" direction is trivial. In the opposite direction, we prove below that if $G$ is LNG-free then $G$ satisfies the minimax property. This will suffice, because all subgames of an LNG-free game are themselves LNG-free.

From now on, we assume that $G$ is LNG-free. For most of the proof, we also assume that $G$ is countable (in the last step we will show how to reduce the result for arbitrary games to the countable case). Our basic approach is the same as in the proof of Theorem~\ref{thm:22}. Recall that $\BB^1(G)$ is the family of sets $B_s$, $s \in S$, which correspond to the rows of the matrix $M_G$. We treat $\BB^1(G)$ as a subset of the compact metric space $\{0,1\}^{\N}$. In the proof of Theorem~\ref{thm:22} we extended $\BB^1(G)$ to $\BB^1_{\down}(G)$, which was shown, under the condition of that theorem, to be a closed subset of $\{0,1\}^{\N}$. Here, instead, we extend $\BB^1(G)$ to its closure in $\{0,1\}^{\N}$, which we denote $\cl(\BB^1(G))$. Thus, $\cl(\BB^1(G))$ is itself a compact metric space.

We want to show that $G$ satisfies the minimax property. By Corollaries~\ref{cor:210} and \ref{cor:212}, we already know that this follows from LNG-freeness in the special case when $\BB^1(G)$ or $\BB^2(G)$ consists of finite sets only. We use this fact in the first step of our proof, where we show that for any $\varepsilon > 0$, every pure strategy of player $1$ in the extended game corresponding to $\cl(\BB^1(G))$ can be replaced by a mixed strategy in the original game $G$, while losing at most $\varepsilon$ against any pure strategy of player $2$.

\textbf{Step 1} Let $\varepsilon > 0$, and let $\z = (z_1,z_2,\ldots) \in \cl(\BB^1(G))$. Then there exists $\p \in \Delta(S)$ such that for all $t \in T$:
\begin{equation} \label{eq5}
z_t = 1 \,\, \Longrightarrow \,\, \pi^{\mix}(\p,t) \ge 1 - \varepsilon.
\end{equation}
Moreover, the mixed strategy $\p = (p_s)_{s \in S}$ may be chosen so that its support is finite and all probabilities $p_s$, $s \in S$, are rational.

\noindent \emph{Proof.} As $\z \in \cl(\BB^1(G))$, we can find a sequence $\z^{(i)} = (z_1^{(i)}, z_2^{(i)},\ldots) \in \BB^1(G)$, $i = 1,2,\ldots$, which converges to $\z$ in the product topology (i.e., pointwise). We consider the subgame $G' = (S',T',\pi ')$ of $G$, where the strategies $s \in S'$ are those corresponding to $\z^{(1)},\z^{(2)},\ldots$, and the strategies $t \in T'$ are those for which $z_t = 1$. If there exists $s \in S'$ such that $\pi(s,t) = 1$ for all $t \in T'$, then we can take $\p$ to be the pure strategy $s$. Otherwise, both $S'$ and $T'$ must be infinite, so $G'$ is a countable game. Since $(\z^{(i)})_{i=1,2,\ldots}$ converges pointwise to $\z$, for every $t \in T'$, all but finitely many $i$ satisfy $z_t^{(i)} = z_t = 1$. This means that all sets in $\BB^2(G')$ are finite. Thus, $G'$ satisfies the assumptions of Corollary~\ref{cor:212} (it is LNG-free as a subgame of $G$). Therefore its $\varepsilon$-value is $1$, so we can find a mixed strategy $\p$ of player $1$ (in $G'$, hence also in $G$) so that $\pi^{\mix}(\p,t) \ge 1 - \varepsilon$ for all $t \in T'$. Such $\p$ satisfies~(\ref{eq5}), and moreover can be chosen to have finite support and assign rational probabilities (see Remark~\ref{com:26}).

\textbf{Step 2} Let $G_n = (S,T_n,\pi_n)$ be the subgame of $G$ obtained by restricting player $2$ to the first $n$ pure strategies. Then for every $n$, the game $G_n$ has a value $v_n$, and the sequence $(v_n)_{n=1,2,\ldots}$ is non-increasing and converges to some $v \in [0,1]$.

\noindent \emph{Proof.} This is proved just like the corresponding step in the proof of Theorem~\ref{thm:22}.

We go on to show that the $v$ of Step~2 is the $\varepsilon$-value of $G$.

\textbf{Step 3} We have $\inf_{\q \in \Delta(T)} \sup_{\p \in \Delta(S)} \pi^{\mix}(\p.\q) \le v$.

\noindent \emph{Proof.} This is again similar to the second part of the corresponding step in the proof of Theorem~\ref{thm:22}. Except for having a supremum, not a maximum over $\p \in \Delta(S)$, the proof is the same.

\textbf{Step 4} We have $\sup_{\p \in \Delta(S)} \inf_{\q \in \Delta(T)} \pi^{\mix}(\p,\q) \ge v$.

\noindent \emph{Proof.} Repeating the arguments in the corresponding step in the proof of Theorem~\ref{thm:22}, with $X = \cl(\BB^1(G))$ instead of $\BB^1_{\down}(G)$, we find a Borel probability measure $P$ on $\cl(\BB^1(G))$, under which player $1$ wins with probability at least $v$ against every strategy of player $2$. We show below, using Step~1, how to replace $P$ by a mixed strategy $\p \in \Delta(S)$, while incurring an arbitrarily small loss in this guarantee.

Let $\varepsilon > 0$ be given. We denote by $\Delta^0_{\Q}(S)$ the set of all mixed strategies in $\Delta(S)$ that have a finite support and assign only rational probabilities (this set is countable). We fix an arbitrary enumeration $\p^{(1)},\p^{(2)},\ldots$ of $\Delta^0_{\Q}(S)$. We define a mapping $f: \cl(\BB^1(G)) \to \Delta^0_{\Q}(S)$ by: $f(\z) = \p^{(i)}$ where $i$ is the least index so that $\p^{(i)}$ satisfies condition~(\ref{eq5}) with respect to $\varepsilon$ and $\z = (z_1,z_2,\ldots)$. By Step~1, this is well defined. The mapping $f$ is Borel measurable, because $f^{-1}(\p^{(i)})$ consists of those $\z = (z_1,z_2,\ldots) \in \cl(\BB^1(G))$ that satisfy: $z_t = 0$ for all $t \in T$ such that $\pi^{\mix}(\p^{(i)},t) < 1 - \varepsilon$ (this defines a closed set in the product topology) and the negation of this for every $j < i$ (this defines an open set).

Consider the probability distribution $\p$ over $S$ defined by the following two-step random process. First, $\z = (z_1,z_2,\ldots)$ is chosen in $\cl(\BB^1(G))$ according to the Borel probability measure $P$ obtained above. Next, a pure strategy $s$ is chosen in $S$ according to the mixed strategy $f(\z)$. We fix $t \in T$ and lower bound the probability that $\pi(s,t) = 1$, when $s$ is chosen according to $\p$ as explained. The known property of $P$ ensures that in the first step $z_t = 1$ with probability at least $v$. By~(\ref{eq5}), the conditional probability that $\pi(s,t) = 1$ given that $z_t = 1$ is at least $1 - \varepsilon$. Thus, the probability that $\pi(s,t) = 1$ is at least $v(1 - \varepsilon)$. This is true for every pure strategy $t \in T$, hence also for every mixed strategy $\q \in \Delta(T)$. So, our $\p \in \Delta(S)$ satisfies $\inf_{\q \in \Delta(T)} \pi^{\mix}(\p,\q) \ge v(1 - \varepsilon)$. As $\varepsilon > 0$ was arbitrary, the claim of Step~4 is proved.

Steps~3 and 4 together show that $G$ satisfies the minimax property, completing the proof in the countable case. In the next step, we extend the result from the countable case to the general case.

\textbf{Step 5} Let $G = (S,T,\pi)$ be a game that violates the minimax property. Then $G$ has a countable subgame $G' = (S',T',\pi ')$ that violates the minimax property.

\noindent \emph{Proof.} Here we use a construction from~\cite{hlm}. We denote the two sides of equation~(\ref{eq2}) by $\ubar{v}$ and $\bar{v}$, respectively, and assume that \[ \sup_{\p \in \Delta(S)} \inf_{\q \in \Delta(T)} \pi^{\mix}(\p,\q) = \ubar{v} < \bar{v} = \inf_{\q \in \Delta(T)} \sup_{\p \in \Delta(S)} \pi^{\mix}(\p,\q). \] We construct an infinite sequence of mixed strategies with finite support, alternating between the two players, of the form $\p^{(1)},\q^{(1)},\p^{(2)},\q^{(2)},\ldots,\p^{(k)},\q^{(k)},\ldots$, in which each strategy gives the player using it an expected payoff at least as good as the better among $\ubar{v}$ and $\bar{v}$, against any pure strategy in the union of the supports of the preceding strategies of the opponent. Namely, for every $k$ we require:
\begin{equation} \label{eq6}
\pi^{\mix}(\p^{(k)},t) \ge \bar{v} \,\,\text{for all}\,\, t \in \cup_{j < k} \mathrm{supp}(\q^{(j)}),
\end{equation}
\begin{equation} \label{eq7}
\pi^{\mix}(s,\q^{(k)}) \le \ubar{v} \,\,\text{for all}\,\, s \in \cup_{j \le k} \mathrm{supp}(\p^{(j)}).
\end{equation}

The sequence is constructed inductively, starting with an arbitrary $\p^{(1)} \in \Delta(S)$ with finite support. Suppose that we have constructed the sequence up to a certain point, and we need to find the next strategy to be added. If that strategy is $\p^{(k)}$, we consider the subgame $G^{(k)} = (S, T^{(k)},\pi^{(k)})$ obtained by restricting player $2$ to the pure strategies $t \in \cup_{j < k} \mathrm{supp}(\q^{(j)})$. This subgame is equivalent to a finite game, because the $\q^{(j)}$ have finite support. Hence $G^{(k)}$ has a value $v^{(k)}$, and clearly $v^{(k)} \ge \bar{v}$. Taking $\p^{(k)}$ to be an optimal mixed strategy (with finite support) of player $1$ in $G^{(k)}$ satisfies condition~(\ref{eq6}). If the strategy to be added is $\q^{(k)}$, a similar argument allows us to choose it so as to satisfy condition~(\ref{eq7}). This completes the inductive construction.

Now, consider the subgame $G' = (S',T',\pi ')$ where $S' = \cup_{j=1}^{\infty} \mathrm{supp}(\p^{(j)})$ and $T' = \cup_{j=1}^{\infty} \mathrm{supp}(\q^{(j)})$. Clearly, $S'$ and $T'$ are at most countable. We claim that
\begin{equation} \label{eq8}
\sup_{\p \in \Delta(S')} \inf_{\q \in \Delta(T')} \pi^{\mix}(\p,\q) \le \ubar{v} < \bar{v} \le \inf_{\q \in \Delta(T')} \sup_{\p \in \Delta(S')} \pi^{\mix}(\p,\q).
\end{equation}
This will show that $G'$ also violates the minimax property, thereby completing the proof. We prove the first inequality in~(\ref{eq8}), the proof of the last inequality is similar. It suffices to show that for any fixed $\p \in \Delta(S')$ we have $\inf_{k \in \N} \pi^{\mix}(\p,\q^{(k)}) \le \ubar{v}$. This follows from~(\ref{eq7}) and the fact that when $s \in S'$ is chosen according to $\p$, the probability that $s \in \cup_{j \le k} \mathrm{supp}(\p^{(j)})$ tends to $1$ as $k \to \infty$. \end{proof}

Theorem~\ref{thm:31} says that LNG-freeness is a necessary and sufficient condition for being totally minimax. However, checking whether a given game satisfies this condition may not be easy. The following corollary lists some simple-to-check counting conditions, each of which precludes an LNG subgame and therefore suffices for being totally minimax.

\begin{corollary} \label{cor:32}
Let $G = (S,T,\pi)$ be a game. Assume that the representing matrix $M_G$ satisfies one (or more) of the following condititons:
\begin{enumerate}
\item The number of $0$ entries in each row (except maybe finitely many) is finite.
\item The number of $1$ entries in each row (except maybe finitely many) is finitely bounded.
\item The number of $0$ entries in each column (except maybe finitely many) is finitely bounded.
\item The number of $1$ entries in each column (except maybe finitely many) is finite.
\end{enumerate}
Then $G$ is totally minimax.
\end{corollary}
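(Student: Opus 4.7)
My plan is to reduce the corollary to Theorem~\ref{thm:31}: since being totally minimax is equivalent to being LNG-free, it suffices to verify that each of the four counting conditions precludes the existence of an LNG subgame. The approach is then a simple counting argument: suppose $G$ contains a subgame isomorphic to LNG, witnessed by an infinite sequence of rows $s_1,s_2,\ldots$ (indexed by $S'\subseteq S$) and an infinite sequence of columns $t_1,t_2,\ldots$ (indexed by $T'\subseteq T$), such that $\pi(s_i,t_j)=1$ iff $i\ge j$. I will then exhibit infinitely many rows or columns of $M_G$ that violate the assumed counting bound.

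Concretely, I would handle the four conditions as follows. For condition~(1), each row $s_i$ of the LNG subgame has $0$'s at positions $t_{i+1},t_{i+2},\ldots$ in $M_G$, so every row indexed by $S'$ has infinitely many $0$'s; since $S'$ is infinite, this contradicts the assumption that all but finitely many rows have finitely many $0$'s. For condition~(2), supposing all but finitely many rows have at most $K$ many $1$'s, observe that for any $i>K$ the row $s_i$ has $1$'s at $t_1,\ldots,t_i$, hence at least $i>K$ ones; this furnishes infinitely many rows exceeding the bound, a contradiction. Conditions~(3) and~(4) are dual: for~(3), for any $j$ the column $t_j$ contains $0$'s at $s_1,\ldots,s_{j-1}$, so $t_j$ has at least $j-1$ zeros in $M_G$, unbounded as $j\to\infty$; for~(4), each column $t_j$ contains $1$'s at $s_j,s_{j+1},\ldots$, hence infinitely many $1$'s, and $T'$ is infinite.

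There is no real obstacle here; the entire content of the corollary is the observation that each of the four conditions uniformly forbids the quantitative profile of rows/columns in an LNG matrix (infinitely many zeros per row; unboundedly many ones per row; unboundedly many zeros per column; infinitely many ones per column) from appearing along \emph{any} infinite selection of rows or columns. The only minor subtlety is that the counts must be taken in $M_G$ rather than in the subgame matrix $M_{G'}$, but since $\pi(s_i,t_j)$ is inherited from $\pi$ the appropriate bound in $M_G$ follows immediately from the LNG pattern on the selected submatrix. Once all four cases have been dispatched, an appeal to Theorem~\ref{thm:31} completes the proof.
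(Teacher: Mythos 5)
Your proof is correct and follows exactly the route the paper intends: the paper gives no separate proof of this corollary, merely noting that each counting condition precludes an LNG subgame so that Theorem~\ref{thm:31} applies, and your four case checks supply precisely the omitted verification.
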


\begin{remark} \label{com:33}
Steve Hanneke~\cite{per} has extended Theorem~\ref{thm:31} to characterize those zero-sum games with a bounded payoff function which are totally minimax. Instead of our single forbidden binary submatrix that represents LNG, he forbids all real-valued submatrices of the same shape in which the supremum of the entries above the diagonal is strictly less than the infimum of the entries below the diagonal. His proof extends the techniques used here to the real case, and also applies Theorem~\ref{thm:31} in the context of a non-trivial discretization argument. A similar extension to the real case of the above-mentioned result of Hanneke, Livni and Moran~\cite{hlm} was obtained by Daskalakis and Golowich~\cite{dg}.
\end{remark}

\section{Connections and applications} \label{sec:4}
\subsection{Enforcing the convergence of nonnegative number series} \label{sec:41}
Here we address the question: Which families of finite sets of positive integers enforce the convergence of nonnegative number series, in the sense of the following definition?
\begin{definition} \label{def:41}
Let $\FF$ be a family of finite subsets of $\N$. We say that $\FF$ is \emph{convergence enforcing} if the following implication holds for every infinite sequence $(a_i)_{i=1,2,\ldots}$ of nonnegative real numbers:
\begin{equation} \label{eq9}
\sum_{i \in A} a_i \le 1 \,\,\text{for all}\, A \in \FF \,\, \Longrightarrow \,\, \sum_{i \in \N} a_i < \infty.
\end{equation}
\end{definition}

The standard definition of convergence of a series yields the simplest example of a convergence enforcing family: when $\FF$ is the family of all finite initial segments of $\N$, the implication~(\ref{eq9}) holds with $\sum_{i \in \N} a_i \le 1$. Another example is the family of all finite initial segments of the even positive integers or of the odd ones, which enforces $\sum_{i \in \N} a_i \le 2$. There are many other, more sophisticated examples. Here is a preliminary observation.

\begin{proposition} \label{pro:42}
Let $\FF$ be a family of finite subsets of $\N$ such that $\cup \FF = \N$. Assume that $\FF$ is convergence enforcing. Then there exists a finite constant $C = C(\FF) < \infty$ so that the following implication holds for every infinite sequence $(a_i)_{i=1,2,\ldots}$ of nonnegative real numbers:
\begin{equation} \label{eq10}
\sum_{i \in A} a_i \le 1 \,\,\text{for all}\,\, A \in \FF \,\, \Longrightarrow \,\, \sum_{i \in \N} a_i \le C.
\end{equation}
\end{proposition}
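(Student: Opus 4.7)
The plan is to prove this by contradiction, by constructing a single offending sequence as a weighted (geometric) combination of a divergent family of offending sequences that would exist in the absence of a uniform bound.

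\textbf{Step 1.} Suppose, for contradiction, that no finite constant $C$ works. Then for each $n \in \N$ we may choose a sequence $(a_i^{(n)})_{i \in \N}$ of nonnegative reals satisfying $\sum_{i \in A} a_i^{(n)} \le 1$ for every $A \in \FF$, yet $\sum_{i \in \N} a_i^{(n)} \ge n \cdot 2^n$.

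\textbf{Step 2.} Using the hypothesis $\cup \FF = \N$, observe that every $i \in \N$ lies in some $A \in \FF$, hence $a_i^{(n)} \le \sum_{j \in A} a_j^{(n)} \le 1$ for all $i$ and all $n$. This pointwise bound ensures that the weighted combination
\[
a_i \;=\; \sum_{n=1}^{\infty} \frac{a_i^{(n)}}{2^n}
\]
is a well-defined nonnegative real for each $i$ (in fact $a_i \le 1$).

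\textbf{Step 3.} Check that $(a_i)$ satisfies the constraints of $\FF$. For any $A \in \FF$, since $A$ is finite and all terms are nonnegative, we may swap summations:
\[
\sum_{i \in A} a_i \;=\; \sum_{n=1}^{\infty} \frac{1}{2^n} \sum_{i \in A} a_i^{(n)} \;\le\; \sum_{n=1}^{\infty} \frac{1}{2^n} \;=\; 1.
\]

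\textbf{Step 4.} Check that the total sum diverges. By Tonelli (all terms are nonnegative),
\[
\sum_{i \in \N} a_i \;=\; \sum_{n=1}^{\infty} \frac{1}{2^n} \sum_{i \in \N} a_i^{(n)} \;\ge\; \sum_{n=1}^{\infty} \frac{1}{2^n} \cdot n \cdot 2^n \;=\; \sum_{n=1}^{\infty} n \;=\; \infty.
\]
Thus $(a_i)$ satisfies the hypothesis of implication~(\ref{eq9}) but violates its conclusion, contradicting the assumption that $\FF$ is convergence enforcing.

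There is no real obstacle; the only delicate point is ensuring that the combined sequence is pointwise finite, which is exactly where the hypothesis $\cup \FF = \N$ enters (without it, some coordinate $i$ could belong to no member of $\FF$ and thus be completely unconstrained, allowing $a_i^{(n)}$ to blow up in $n$).
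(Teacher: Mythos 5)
Your proof is correct and is essentially identical to the paper's: both argue by contradiction, take a sequence of offending families with total sums growing fast enough, form the geometric combination $a_i = \sum_n a_i^{(n)}/2^n$, and use $\cup \FF = \N$ to get the pointwise bound $a_i^{(n)} \le 1$ ensuring finiteness. The only cosmetic difference is that the paper takes $\sum_i a_i^{(n)} > 2^n$ (so the divergent lower bound is $\sum_n 1$) while you take $n \cdot 2^n$ (giving $\sum_n n$).
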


\begin{proof} Suppose, for the sake of contradiction, that no such constant $C$ exists. Applying this to all $C$ of the form $2^n$, $n \in \N$, we can find for every $n$ a sequence $(a_i^{(n)})_{i=1,2,\ldots}$ as above, such that $\sum_{i \in A} a_i^{(n)} \le 1$ for all $A \in \FF$, yet $\sum_{i \in \N} a_i^{(n)} > 2^n$. Note that $a_i^{(n)} \le 1$ for all $n$ and $i$, because every $i$ belongs to some $A \in \FF$. Hence $a_i = \sum_{n \in \N} \frac{a_i^{(n)}}{2^n}$ is a finite nonnegative number for each $i$. However, the sequence $(a_i)_{i=1,2,\ldots}$ violates~(\ref{eq9}): we have \[ \sum_{i \in A} a_i = \sum_{i \in A} \sum_{n \in \N} \frac{a_i^{(n)}}{2^n} = \sum_{n \in \N} \frac{1}{2^n} \sum_{i \in A} a_i^{(n)} \le \sum_{n \in \N} \frac{1}{2^n} = 1 \] for all $A \in \FF$, yet \[ \sum_{i \in \N} a_i = \sum_{i \in \N}
\sum_{n \in \N} \frac{a_i^{(n)}}{2^n} = \sum_{n \in \N} \frac{1}{2^n} \sum_{i \in \N} a_i^{(n)} \ge \sum_{n \in \N} 1 = \infty. \] This contradicts our assumption that $\FF$ is convergence enforcing. \end{proof}

Another observation is that $\FF$ is convergence enforcing if and only if its downward closure is. Therefore, when studying this property we may assume that the family $\FF$ is downward closed, i.e., $A \subseteq B$ and $B \in \FF \Rightarrow A \in \FF$.

While downward-closed convergence enforcing families may look very different from one another, we show here that they all have one thing in common: they must contain an infinite ascending chain. This is an application of Corollary~\ref{cor:210}.

\begin{theorem} \label{thm:43}
Let $\FF$ be a family of finite subsets of $\N$. Assume that $\FF$ is downward closed and convergence enforcing. Then $\FF$ contains an infinite ascending chain $A_1 \subset A_2 \subset \cdots \subset A_i \subset \cdots$.
\end{theorem}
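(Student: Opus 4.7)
The plan is to argue by contradiction via Corollary~\ref{cor:210}. Assume $\FF$ is downward closed and convergence enforcing, but contains no infinite ascending chain $A_1\subset A_2\subset\cdots$. Using the corollary, I will produce a sequence of nonnegative reals $(a_i)_{i=1,2,\ldots}$ that satisfies $\sum_{i\in A}a_i\le 1$ for every $A\in\FF$ yet has $\sum_i a_i=\infty$, contradicting convergence enforcement.

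First, $\FF$ must be infinite: otherwise $\cup\FF$ would be a finite union of finite sets, hence finite, and setting $a_i=1$ on the infinite complement $\N\setminus\cup\FF$ (and $a_i=0$ elsewhere) would already violate convergence enforcement. So $\FF$ is a countably infinite family of finite sets. I would next build the game $G=(\FF,\N,\pi)$ with $\pi(A,t)=\ind[t\in A]$. Then $B_A=A$ for each $A\in\FF$, so $\BB^1(G)=\FF$ consists of finite sets only, and since $\FF$ is downward closed, $\BB^1_{\down}(G)=\FF$ as well. The contradiction hypothesis is exactly condition~(2) of Proposition~\ref{pro:29}, so all three equivalent conditions there hold and Corollary~\ref{cor:210} applies to $G$.

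In particular, the right-hand equality in Corollary~\ref{cor:210} gives $\inf_{\q\in\Delta(\N)}\max_{\p\in\Delta(\FF)}\pi^{\mix}(\p,\q)=0$. Since player~$1$'s best response can always be taken to be pure, this says that for every $\varepsilon>0$ there is a probability distribution $\q=(q_t)_{t\in\N}$ with $\sum_{t\in A}q_t\le\varepsilon$ for all $A\in\FF$. For each $n\ge 1$, pick such a witness $\q^{(n)}=(q_t^{(n)})_{t\in\N}$ with $\varepsilon=2^{-n}$, and set $a_t=\sum_{n=1}^{\infty}q_t^{(n)}$. By Tonelli, $\sum_{t\in A}a_t=\sum_n\sum_{t\in A}q_t^{(n)}\le\sum_n 2^{-n}=1$ for every $A\in\FF$, whereas $\sum_t a_t=\sum_n\sum_t q_t^{(n)}=\sum_n 1=\infty$, giving the required contradiction. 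The only place the chain-free hypothesis is genuinely used is in funneling it through Proposition~\ref{pro:29} into LNG-freeness, so that Corollary~\ref{cor:210} pins the $\varepsilon$-value of $G$ at~$0$; once that is secured, the divergent series is a routine weighted superposition of $\varepsilon$-optimal strategies for player~$2$, and I do not foresee any real obstacle.
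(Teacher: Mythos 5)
Your overall route is the paper's: build the game with $T=\N$ and $\BB^1(G)=\FF$, use downward closure to get $\BB^1_{\down}(G)=\FF$, funnel the no-chain hypothesis through Proposition~\ref{pro:29} into Corollary~\ref{cor:210} to pin the $\varepsilon$-value at $0$, and turn $\varepsilon$-optimal strategies of player~$2$ into a fooling series. The one real difference is cosmetic but pleasant: the paper first proves Proposition~\ref{pro:42} to extract a uniform constant $C$ and then uses a single $\q$ with $\varepsilon<1/C$, whereas you superpose the strategies $\q^{(n)}$ with weights implicit in $\varepsilon=2^{-n}$ directly. That superposition is exactly the trick used inside the paper's proof of Proposition~\ref{pro:42}, so you have simply inlined it; this is fine and arguably shorter.

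There is, however, a gap in the degenerate case you only half-handle: you dispose of $\FF$ finite, but not of $\FF$ infinite with $\cup\FF\neq\N$. Your bound $a_t=\sum_n q_t^{(n)}\le\sum_n 2^{-n}$ relies on $\{t\}\in\FF$, which holds for $t\in\cup\FF$ but for no $t\notin\cup\FF$. If $\cup\FF\neq\N$, the game has all-zero columns, and the natural $\varepsilon$-optimal $\q^{(n)}$ concentrates on such a column $t_0$, making $a_{t_0}=+\infty$, so $(a_t)$ is not a sequence of real numbers; and simply zeroing out those coordinates can destroy divergence (e.g., if $\q^{(n)}$ puts mass $1-2^{-n}$ on $t_0$, the surviving part sums to $\sum_n 2^{-n}<\infty$). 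The fix is the paper's three-line reduction: if $\N\setminus\cup\FF$ is infinite, convergence enforcement fails outright (put $a_i=1$ there), and if it is finite you may replace $\N$ by $\cup\FF$ before building the game, after which every column $t$ satisfies $\{t\}\in\FF$ and your estimate $q_t^{(n)}\le 2^{-n}$ is legitimate. With that reduction added, your argument is complete.
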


\begin{proof} Suppose, for the sake of contradiction, that $\FF$ contains no infinite ascending chain.

We may assume that $\cup \FF = \N$. Indeed, the set $\N \setminus (\cup \FF)$ must be finite, otherwise convergence enforcing clearly fails. If $\N \setminus (\cup \FF)$ is non-empty, we may consider $\cup \FF$ as our copy of $\N$.

Assuming that $\cup \FF = \N$, we can find a constant $C = C(\FF) < \infty$ so that~(\ref{eq10}) holds for every infinite sequence $(a_i)_{i=1,2,\ldots}$ of nonnegative real numbers. Consider the game $G = (S,T,\pi)$ where $T = \N$ and $\BB^1(G) = \FF$. As $\FF$ is downward closed, we have $\BB^1_{\down}(G) = \FF$ as well. By our assumption, this family contains no infinite ascending chain. Thus, condition~(2) in Proposition~\ref{pro:29} holds. So Corollary~\ref{cor:210} applies, showing that $\inf_{\q \in \Delta(T)} \max_{\p \in \Delta(S)} \pi^{\mix}(\p,\q) = 0$. Let $0 < \varepsilon < \frac{1}{C}$, and let $\q = (q_i)_{i \in \N}$ be a mixed strategy of player $2$ such that $\pi^{\mix}(s,\q) \le \varepsilon$ for every pure strategy $s$ of player $1$. Now consider the sequence $(a_i)_{i=1,2,\ldots}$ where $a_i = \frac{q_i}{\varepsilon}$. We get a contradiction by showing that this sequence violates~(\ref{eq10}). To check the premise of~(\ref{eq10}), let $A \in \FF$ and let $s$ be the corresponding pure strategy of player $1$. Then $\sum_{i \in A} a_i = \frac{1}{\varepsilon} \sum_{i \in A} q_i = \frac{1}{\varepsilon} \pi^{\mix}(s,\q) \le 1$. On the other hand, $\sum_{i \in \N} a_i = \frac{1}{\varepsilon} > C$. \end{proof}

The contrapositive of the theorem says that if a downward-closed family $\FF$ of finite subsets of $\N$ contains no infinite ascending chain, then there is a divergent series $\sum_{i \in \N} a_i$ that ``fools" $\FF$ in the sense that $\sum_{i \in A} a_i \le 1$ for all $A \in \FF$. A nice example is the family $\FF$ of all sets $A$ such that $|A| \le \min A$, which is fooled by the harmonic series. But there are richer, more complicated examples of families without an infinite ascending chain, for which it is hard to explicitly find a fooling series.

\subsection{Separation of row and column densities in infinite binary matrices} \label{sec:42}
The notions of upper/lower asymptotic density of a subset of $\N$, or equivalently of an infinite binary sequence, play an important role in number theory. For a sequence $\mathbf{a} = (a_i)_{i \in \N} \in \{0,1\}^{\N}$, they are respectively defined by: \[ \bar{d}(\mathbf{a}) = \limsup_{n \to \infty} \frac{1}{n} \sum_{i=1}^n a_i \,\,\, \text{and} \,\,\, \ubar{d}(\mathbf{a}) = \liminf_{n \to \infty} \frac{1}{n} \sum_{i=1}^n a_i. \] If the two are equal, their common value $d(\mathbf{a})$ is called the asymptotic density of $\mathbf{a}$.

Here we look at an $\N$ by $\N$ binary matrix $A = (a_{ij})_{i \in \N, j \in \N} \in \{0,1\}^{\N \times \N}$. We denote its $i$-th row by $\rr^{(i)} = (a_{ij})_{j \in \N}$ and its $j$-th column by $\cc^{(j)} = (a_{ij})_{i \in \N}$. How are the row and column upper/lower asymptotic densities related?

\begin{definition} \label{def:44}
Let $A$ be an $\N$ by $\N$ binary matrix. We say that the row and column upper/lower asymptotic densities are \emph{separated} if there exist real numbers $0 \le \alpha < \beta \le 1$ such that:
\begin{equation} \label{eq11}
\bar{d}(\rr^{(i)}) \le \alpha \,\,\text{and}\,\, \ubar{d}(\cc^{(j)}) \ge \beta \,\,\text{for all}\,\, i,j \in \N.
\end{equation}
\end{definition}

Clearly, the analog of~(\ref{eq11}) would be impossible in a finite matrix. But such separation can occur in an $\N$ by $\N$ matrix. Indeed, in the binary matrix from Example~\ref{ex:13} (in which $a_{ij} = 1 \Leftrightarrow i \ge j$), every row has density $0$ while every column has density $1$. We call this matrix fully lower-triangular. We say that a matrix $A$ contains a fully lower-triangular submatrix, if it has a submatrix which, possibly after reordering rows and columns, is fully lower-triangular. Here we show, as an application of Theorem~\ref{thm:31}, that separation can only occur in the presence of a fully lower-triangular submatrix.

\begin{theorem} \label{thm:45}
Let $A$ be an $\N$ by $\N$ binary matrix in which the row and column upper/lower asymptotic densities are separated. Then $A$ contains a fully lower-triangular submatrix.
\end{theorem}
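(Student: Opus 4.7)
The plan is to argue by contrapositive via Theorem~\ref{thm:31}. Suppose $A$ contains no fully lower-triangular submatrix, and consider the game $G = (\N, \N, \pi)$ defined by $\pi(i, j) = a_{ij}$. A subgame of $G$ isomorphic to LNG would correspond, after reordering, to a fully lower-triangular submatrix of $A$; since LNG-freeness is invariant to the tie-breaking convention (as noted after Definition~\ref{def:28}), $G$ is LNG-free. By Theorem~\ref{thm:31}, $G$ is totally minimax, and in particular $G$ itself satisfies the minimax property. I will derive a contradiction by showing that
\[ \sup_{\p \in \Delta(\N)} \inf_{\q \in \Delta(\N)} \pi^{\mix}(\p,\q) \le \alpha < \beta \le \inf_{\q \in \Delta(\N)} \sup_{\p \in \Delta(\N)} \pi^{\mix}(\p,\q). \]

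For the upper bound, I would test each $\p \in \Delta(\N)$ against the uniform distributions $\q_n \in \Delta(\N)$ on $\{1, \ldots, n\}$. A direct computation gives
\[ \pi^{\mix}(\p, \q_n) = \sum_{i \in \N} p_i \cdot \frac{1}{n}\sum_{j=1}^n a_{ij}. \]
Since each inner average lies in $[0,1]$, applying Fatou's lemma to the nonnegative sequences $1 - \frac{1}{n}\sum_{j=1}^n a_{ij}$ (a ``reverse Fatou'' move) together with $\sum_i p_i = 1$ yields
\[ \limsup_{n \to \infty} \pi^{\mix}(\p, \q_n) \le \sum_{i \in \N} p_i \, \bar{d}(\rr^{(i)}) \le \alpha. \]
Hence $\inf_{\q} \pi^{\mix}(\p, \q) \le \alpha$, and taking the supremum over $\p$ establishes the first inequality. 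A symmetric argument, using the uniform distributions $\p_n$ on rows $\{1, \ldots, n\}$ together with Fatou's lemma applied directly to $\frac{1}{n}\sum_{i=1}^n a_{ij}$ and the hypothesis $\ubar{d}(\cc^{(j)}) \ge \beta$, shows $\liminf_n \pi^{\mix}(\p_n, \q) \ge \sum_j q_j \, \ubar{d}(\cc^{(j)}) \ge \beta$ for every $\q \in \Delta(\N)$, yielding the lower bound on $\inf_{\q} \sup_{\p} \pi^{\mix}(\p,\q)$. Since $\alpha < \beta$, this contradicts the minimax equality obtained from Theorem~\ref{thm:31}, completing the proof.

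Most of the work is thus absorbed into the translation to a game-theoretic statement and the invocation of Theorem~\ref{thm:31}; the only point requiring any care is the exchange of the infinite sum $\sum_i p_i$ with $\limsup_n$ (and dually $\sum_j q_j$ with $\liminf_n$), but this is routine because all entries are uniformly bounded in $[0,1]$ and $(p_i),(q_j)$ are summable probability weights. I do not foresee any serious obstacle beyond setting these bookkeeping steps up carefully.
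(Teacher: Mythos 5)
Your proposal is correct and follows essentially the same route as the paper: translate $A$ into a game, show that the density separation forces the two sides of the minimax identity apart by testing against uniform distributions on initial segments of columns (resp.\ rows), and invoke Theorem~\ref{thm:31}. The only cosmetic difference is that you package the limit interchange as a (reverse) Fatou argument, where the paper carries out the same estimate by an explicit choice of $k$ and $n=\max\{n_1,\ldots,n_k\}$ with an $\varepsilon/2$ split; both are valid.
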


\begin{proof} Consider the game $G = (S,T,\pi)$ represented by the matrix $A$ (here $S = T = \N$). We claim that
\begin{equation} \label{eq12}
\sup_{\p \in \Delta(S)} \inf_{\q \in \Delta(T)} \pi^{\mix}(\p,\q) \le \alpha < \beta \le \inf_{\q \in \Delta(T)} \sup_{\p \in \Delta(S)} \pi^{\mix}(\p,\q),
\end{equation}
where $\alpha$ and $\beta$ witness the separation condition~(\ref{eq11}) for $A$. This will show that $G$ violates the minimax property. By Theorem~\ref{thm:31}, it will follow that $G$ has an LNG subgame, or equivalently, $A$ contains a fully lower-triangular submatrix, completing the proof.

We prove the first inequality in~(\ref{eq12}), the proof of the last inequality is similar. Let $\p = (p_i)_{i \in S} \in \Delta(S)$ and $\varepsilon > 0$ be given. We have to find some $\q \in \Delta(T)$ such that $\pi^{\mix}(\p,\q) \le \alpha + \varepsilon$. First, we find a large enough $k$ so that $\sum_{i=1}^k p_i \ge 1 - \frac{\varepsilon}{2}$. Next, for each $i = 1,\ldots,k$, we look at the corresponding row $\rr^{(i)}$ in the matrix $A$. By~(\ref{eq11}) its upper density is at most $\alpha$, so there exists $n_i \in \N$ such that $\frac{1}{n} \sum_{j=1}^n a_{ij} \le \alpha + \frac{\varepsilon}{2}$ for all $n \ge n_i$. Now let $n = \max\{n_1,\ldots,n_k\}$ and let $\q \in \Delta(T)$ be the uniform distribution over the first $n$ columns. Then we have \[ \pi^{\mix}(\p,\q) = \sum_{i=1}^{\infty} \sum_{j=1}^n \frac{p_i}{n} a_{ij} = \sum_{i=1}^k \frac{p_i}{n} \sum_{j=1}^n a_{ij} + \sum_{i=k+1}^{\infty} \frac{p_i}{n} \sum_{j=1}^n a_{ij} \le \bigl(\alpha + \frac{\varepsilon}{2}\bigr) + \frac{\varepsilon}{2} = \alpha + \varepsilon, \] as desired. \end{proof}

\subsection{Duality of fractional matchings and covers in infinite hypergraphs} \label{sec:43}

A hypergraph is a pair $H = (V,\EE)$, where $V$ is a set of vertices and $\EE$ is a set of subsets of $V$, called edges. The sets $V$ and $\EE$, and the edges $E$ in $\EE$, may be finite or infinite. A fractional matching in $H$ is a function $f: \EE \to \R_+$ such that $\sum_{E \ni v} f(E) \le 1$ for every vertex $v \in V$. Its size is $|f| = \sum_{E \in \EE} f(E)$. A fractional cover in $H$ is a function $g: V \to \R_+$ such that $\sum_{v \in E} g(v) \ge 1$ for every edge $E \in \EE$. Its size is $|g| = \sum_{v \in V} g(v)$. Two important hypergraph invariants are \[ \nu^*(H) = \sup\{ |f| : f \,\,\text{is a fractional matching in}\,\, H\}\] and \[ \tau^*(H) = \inf\{ |g| : g \,\,\text{is a fractional cover in}\,\, H\}.\] Both invariants may be finite or infinite.

Fractional matchings and covers in finite hypergraphs have numerous applications in combinatorics and discrete optimization. A fundamental fact is the equality $\nu^*(H) = \tau^*(H)$ which holds for every finite hypergraph by the duality theorem of linear programming. What can be said for infinite hypergraphs?

\begin{definition} \label{def:46}
We say that the hypergraph $H$ satisfies the \emph{weak duality property} if $\nu^*(H) = \tau^*(H)$.
\end{definition}

A simple example of an infinite hypergraph where weak duality fails is the \emph{tail hypergraph} $H = (\N,\EE)$, where the edges in $\EE$ are all sets of the form $E_j = \{ i \in \N: i \ge j\}$, $j \in \N$. Here $\nu^*(H) =1$ while $\tau^*(H) = \infty$.

A partial sub-hypergraph of $H = (V,\EE)$ is obtained by restricting $V$ and $\EE$ to some subsets: it is of the form $H' = (V',\EE')$ where $V' \subseteq V$ and $\EE' = \{ E \cap V': E \in \EE_0\}$ for some $\EE_0 \subseteq \EE$. As an application of Theorem~\ref{thm:31}, we characterize here those hypergraphs all of whose partial sub-hypergraphs satisfy the weak duality property.

\begin{theorem} \label{thm:47}
Let $H = (V,\EE)$ be a hypergraph. Then every partial sub-hypergraph $H'$ of $H$ satisfies the weak duality property if and only if $H$ has no partial sub-hypergraph $H'$ which is isomorphic to the tail hypergraph.
\end{theorem}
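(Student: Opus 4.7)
The plan is to reduce the statement to Theorem~\ref{thm:31} via a natural game-theoretic translation. To the hypergraph $H = (V, \EE)$ I associate the game $G(H) = (V, \EE, \pi)$ in which player~$1$ picks a vertex, player~$2$ picks an edge, and $\pi(v,E) = 1$ iff $v \in E$. The heart of the argument will be an LP-style identification of the two sides of the minimax equation with the hypergraph invariants: with the convention $1/\infty := 0$,
\[
\sup_{\p \in \Delta(V)} \inf_{\q \in \Delta(\EE)} \pi^{\mix}(\p,\q) = \frac{1}{\tau^*(H)}
\quad\text{and}\quad
\inf_{\q \in \Delta(\EE)} \sup_{\p \in \Delta(V)} \pi^{\mix}(\p,\q) = \frac{1}{\nu^*(H)}.
\]
I would prove the first equality by noting that since the inner infimum is attained on pure edges, $\inf_{\q} \pi^{\mix}(\p,\q) \ge c$ is equivalent to $\sum_{v \in E} p_v \ge c$ for every $E \in \EE$, i.e., to $\p/c$ being a fractional cover of size $1/c$; conversely, $g/|g|$ realizes the best such $c$ for any fractional cover $g$ of finite size. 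The second equality is symmetric, via fractional matchings. It follows that $G(H)$ satisfies the minimax property if and only if $\nu^*(H) = \tau^*(H)$, i.e., $H$ has the weak duality property.

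Two pieces of bookkeeping then finish the reduction. First, subgames of $G(H)$ in the sense of Definition~\ref{def:14} correspond, after identifying duplicate columns, to partial sub-hypergraphs of $H$: restricting to $V' \subseteq V$ and $\EE_0 \subseteq \EE$ yields exactly the game $G(H')$ for $H' = (V', \{E \cap V' : E \in \EE_0\})$. So every partial sub-hypergraph of $H$ satisfies weak duality if and only if $G(H)$ is totally minimax. Second, an LNG subgame of $G(H)$ picks out vertices $v_1,v_2,\ldots$ and edges $E_1,E_2,\ldots$ with $v_i \in E_j \iff i \ge j$; the partial sub-hypergraph on $\{v_1,v_2,\ldots\}$ then has edges $\{v_i : i \ge j\}$ for $j \ge 1$, which is precisely a copy of the tail hypergraph, and the correspondence reverses. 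Hence $G(H)$ is LNG-free if and only if $H$ has no partial sub-hypergraph isomorphic to the tail hypergraph.

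Chaining these equivalences and applying Theorem~\ref{thm:31} to $G(H)$ yields exactly the claim. The only step I expect to require real care is the first --- the LP-style dictionary identifying $\sup\inf$ and $\inf\sup$ with $1/\tau^*$ and $1/\nu^*$; this is standard but needs a moment's attention at the degenerate values $0$ and $\infty$, handled cleanly by the convention $1/\infty = 0$ together with a brief separate treatment of trivial cases such as $\EE = \emptyset$, in which both conditions in the statement hold vacuously.
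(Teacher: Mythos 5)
Your proposal is correct and follows essentially the same route as the paper: the same associated game $G(H)$, the same identification of $\sup\inf$ and $\inf\sup$ with $1/\tau^*(H)$ and $1/\nu^*(H)$ via the reciprocal maps between normalized fractional covers/matchings and mixed strategies, and the same observation that partial sub-hypergraphs correspond to subgames and the tail hypergraph to LNG, reducing everything to Theorem~\ref{thm:31}. Your extra care with the degenerate values ($1/\infty = 0$, empty edge set) is a reasonable refinement the paper leaves implicit.
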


\begin{proof} Given a hypergraph $H = (V,\EE)$, we associate with it a game $G = G(H) = (V,\EE,\pi)$ in which player $1$ chooses a vertex $v \in V$, player $2$ chooses an edge $E \in \EE$, and player $1$ wins if and only if $v \in E$. Note that if $H$ is the tail hypergraph then $G(H)$ is isomorphic to the LNG game. We claim that
\begin{equation} \label{eq13}
\sup_{\p \in \Delta(V)} \inf_{\q \in \Delta(\EE)} \pi^{\mix}(\p,\q) = \frac{1}{\tau^*(H)} \,\,\, \text{and} \,\, \inf_{\q \in \Delta(\EE)} \sup_{\p \in \Delta(V)} \pi^{\mix}(\p,\q) = \frac{1}{\nu^*(H)}.
\end{equation}
This will show that $H$ satisfies the weak duality property if and only if $G(H)$ satisfies the minimax property. Clearly, partial sub-hypergraphs correspond to subgames, so the present theorem is in fact a reformulation of Theorem~\ref{thm:31}.

We prove the first part of~(\ref{eq13}), the proof of the second part is similar. If $g$ is a fractional cover in $H$ of finite size $|g|$, then the mixed strategy $\p = (p_v)_{v \in V}$ in which $p_v = \frac{g(v)}{|g|}$ guarantees to player $1$ in $G(H)$ an expected payoff of at least $\frac{1}{|g|}$. Conversely, if $\p = (p_v)_{v \in V}$ is a mixed strategy of player $1$ in $G(H)$ that guarantees an expected payoff of at least $\alpha > 0$, then the function $g : V \to \R_+$ where $g(v) = \frac{p_v}{\alpha}$ is a fractional cover in $H$ of size $|g| = \frac{1}{\alpha}$. These two reciprocal maps together establish the first part of~(\ref{eq13}). \end{proof}

We note that Aharoni and Holzman~\cite{ah} studied the duality of fractional matchings and covers in infinite hypergraphs. They introduced two versions of duality: the weak one that we use here, and a stronger one which is an infinite counterpart of the complementary slackness conditions of linear programming duality. The main positive results in~\cite{ah} gave sufficient conditions for weak duality, which essentially correspond to conditions~(2) and (4) in Corollary~\ref{cor:32}. However, no characterization result corresponding to Theorem~\ref{thm:31} was obtained there.

\subsection{A characterization of stable theories in first-order logic} \label{sec:44}
Stability is a central concept in model theory, with many applications to algebra and other areas of mathematics. Shelah~\cite{sh} established several equivalent definitions of stability, of which we give here one that is related to our work.

\begin{definition} \label{def:48}
Let $\LL$ be a language in first-order logic, and let $T$ be a complete theory in this language. Let $\varphi(x_1,\ldots,x_n;y_1,\ldots,y_m)$ be a formula in $\LL$ with $n+m$ free variables. We say that $\varphi$ has the \emph{order property} over $T$ if there exists a model $\MM$ of $T$, and there exist two infinite sequences $(\mathbf{a}_i)_{i=1,2,\ldots}$ and $(\bb_j)_{j=1,2,\ldots}$ where each $\mathbf{a}_i$ is an $n$-tuple of elements of $\MM$ and each $\bb_j$ is an $m$-tuple of elements of $\MM$, such that \[ \MM \models \varphi(\mathbf{a}_i;\bb_j) \,\, \Longleftrightarrow \,\, i \ge j. \] The theory $T$ is \emph{stable} if no such formula $\varphi$ has the order property over $T$.
\end{definition}

Here we apply Theorem~\ref{thm:31} to give a game theoretic characterization of stable theories. Fix a language $\LL$ and a theory $T$ as above. Given a formula $\varphi(x_1,\ldots,x_n;y_1,\ldots,y_m)$ in $\LL$ and a model $\MM$ of $T$, we associate with them a game $G = G(\LL,T,\varphi,\MM)$ as follows: player $1$ chooses an $n$-tuple $\mathbf{a}$ of elements of $\MM$, player $2$ chooses an $m$-tuple $\bb$ of elements of $\MM$, and player $1$ wins if and only if $\MM \models \varphi(\mathbf{a};\bb)$.

\begin{theorem} \label{thm:49}
Let $\LL$ be a language in first-order logic, and let $T$ be a complete theory in $\LL$. Then $T$ is stable if and only if for every formula $\varphi(x_1,\ldots,x_n;y_1,\ldots,y_m)$ in $\LL$ and every model $\MM$ of $T$ the associated game $G = G(\LL,T,\varphi,\MM)$ is totally minimax.
\end{theorem}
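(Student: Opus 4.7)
The plan is to reduce the stability/totally-minimax equivalence to a direct dictionary between the order property and the presence of an LNG subgame, using Theorem~\ref{thm:31} as a black box. The whole proof will essentially be unwinding definitions; the only content is matching up the ``lower-triangular'' combinatorial pattern on both sides of the equivalence.

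First I would apply Theorem~\ref{thm:31} to replace ``totally minimax'' by the equivalent condition ``LNG-free'' on each game $G(\LL,T,\varphi,\MM)$. After this reduction, the statement to prove becomes: $T$ is stable if and only if for every formula $\varphi(x_1,\dots,x_n;y_1,\dots,y_m)$ in $\LL$ and every model $\MM \models T$, the game $G(\LL,T,\varphi,\MM)$ is LNG-free.

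Next, I would establish the key dictionary lemma: the game $G(\LL,T,\varphi,\MM)$ contains a subgame isomorphic to LNG if and only if $\varphi$ admits order-property witnesses inside $\MM$. Unpacking the definitions on both sides, an LNG subgame of $G(\LL,T,\varphi,\MM)$ is by definition a pair of sequences of strategies $(\mathbf{a}_i)_{i \ge 1}$ (i.e.\ $n$-tuples from $\MM$) for player~$1$ and $(\bb_j)_{j \ge 1}$ (i.e.\ $m$-tuples from $\MM$) for player~$2$ such that, after reordering, the restricted payoff matrix is fully lower-triangular. Since the payoff function of $G(\LL,T,\varphi,\MM)$ is $\pi(\mathbf{a},\bb) = 1 \Leftrightarrow \MM \models \varphi(\mathbf{a};\bb)$, this is exactly the condition
\[ \MM \models \varphi(\mathbf{a}_i;\bb_j) \,\Longleftrightarrow\, i \ge j,\]
which is the definition of $\varphi$ having order-property witnesses in $\MM$. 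The possibility of reordering the sequences is harmless because LNG-isomorphism is invariant under any reindexing of the two strategy sets.

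Finally I would combine the two directions. If $T$ is stable, then no formula $\varphi$ has the order property over $T$, so for every model $\MM$ of $T$ and every $\varphi$ there are no order-property witnesses in $\MM$, hence, by the dictionary, no LNG subgame in $G(\LL,T,\varphi,\MM)$, hence it is totally minimax. Conversely, if $T$ is not stable then some $\varphi$ has the order property, so there is some $\MM \models T$ with the witnessing sequences, which by the dictionary yields an LNG subgame in $G(\LL,T,\varphi,\MM)$, contradicting LNG-freeness (and hence totally minimax) for this particular $\varphi,\MM$. I do not anticipate any real obstacle: Theorem~\ref{thm:31} absorbs all the game-theoretic work, and the remaining task is a definition chase comparing the two lower-triangular patterns. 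The only point requiring a line of justification is that ``$\MM$ is a model of $T$'' in Definition~\ref{def:48} and in the statement of Theorem~\ref{thm:49} refer to the same class of structures, so the quantification over $\MM$ matches exactly.
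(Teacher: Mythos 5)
Your proposal is correct and matches the paper's own proof: both reduce via Theorem~\ref{thm:31} to LNG-freeness and then observe that an LNG subgame of $G(\LL,T,\varphi,\MM)$ is exactly a pair of order-property witness sequences for $\varphi$ in $\MM$. No differences of substance.
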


\begin{proof} Fixing $\LL,T,\varphi$ and $\MM$, we note that a pair of sequences $(\mathbf{a}_i)_{i=1,2,\ldots}$ of $n$-tuples of elements of $\MM$ and $(\bb_j)_{j=1,2,\ldots}$ of $m$-tuples of elements of $\MM$ such that $\MM \models \varphi(\mathbf{a}_i;\bb_j) \Leftrightarrow i \ge j$ corresponds to an LNG subgame of $G = G(\LL,T,\varphi,\MM)$. Thus, $\varphi$ has the order property over $T$ if and only if there exists a model $\MM$ of $T$ such that $G = G(\LL,T,\varphi,\MM)$ has an LNG subgame. Therefore, by the definition above, $T$ is stable if and only if for every $\varphi$ and $\MM$ the game $G = G(\LL,T,\varphi,\MM)$ is LNG-free, or equivalently (by Theorem~\ref{thm:31}), $G$ is totally minimax. \end{proof}

We remark that in the case of a countable language $\LL$, it is known that it suffices to check the definition above for countable models $\MM$ of $T$. In this case, the games $G$ in Theorem~\ref{thm:49} are all countable as well.

We note that there are several other examples where strategic games nicely capture concepts in logic and set theory. But we are not aware of such instances where mixed strategies are involved, as is the case here.

\subsection{Online learning with simple  predictors} \label{sec:45}
We briefly describe here a standard model of online learning. There is a set $\XX$ of instances, each of which may be labelled by $0$ or $1$. A concept class $\mathbb{H} \subseteq \{0,1\}^{\XX}$ is given; each hypothesis $h \in \mathbb{H}$ represents a possible labelling of all instances. A learner, who knows the class $\mathbb{H}$ but not the ``true" hypothesis, is faced with a sequence of instances that are revealed in online fashion. In each round, the learner tries to guess the label of the given instance, and then learns whether the guess was correct or not, before proceeding to the next round. A fundamental result of Littlestone~\cite{lit} characterizes those concept classes $\mathbb{H}$ that are online learnable, in the sense that the learner can keep the total number of mistakes below some finite constant. To this end, he defined a combinatorial complexity measure called the Littlestone dimension of $\mathbb{H}$, and showed that $\mathbb{H}$ is online learnable if and only if its Littlestone dimension is finite.

Hanneke, Livni and Moran~\cite{hlm} showed recently that when $\mathbb{H}$ is online learnable, the learning may be achieved by a ``simple predictor" of a certain particular type. Their analysis used the following game $G = G(\mathbb{H})$ associated with the concept class $\mathbb{H}$: the learner chooses a hypothesis $h \in \mathbb{H}$, an adversary chooses a labelled instance $(x,y) \in \XX \times \{0,1\}$, and the learner wins if and only if $h(x)=y$. A key step in their proof is showing that when $\mathbb{H}$ has finite Littlestone dimension, the game $G(\mathbb{H})$ satisfies the minimax property. This motivated them to prove, more generally, that a game $G$ of finite VC-dimension (see explanation in Section~\ref{sec:3}) is totally minimax if and only if it is LNG-free. Our Theorem~\ref{thm:31} generalizes their result by removing the assumption of finite VC-dimension. This has no consequence for their original application, though, because the VC-dimension is bounded above by the Littlestone dimension.

Incidentally, the argument in~\cite{hlm} involved yet another complexity measure from learning theory: the threshold dimension, which is related to LNG-freeness. In a previous paper, Alon, Livni, Malliaris and Moran~\cite{almm} had proved that the threshold dimension is finite if and only if the Littlestone dimension is (this was essentially a learning theory reformulation of a result of Shelah~\cite{sh} about stability -- see Subsection~\ref{sec:44}). Hanneke, Livni and Moran~\cite{hlm} used this to show that if $\mathbb{H}$ has finite Littlestone dimension then $G(\mathbb{H})$ is LNG-free, allowing them to deduce the minimax property in their application.

\vspace{10pt}

\textbf{Acknowledgments} The research for this paper originated in discussions with Shay Moran, aiming to generalize his result with Hanneke and Livni. I am grateful to him and to Ron Aharoni, Zachary Chase, Bogdan Chornomaz, Yuval Dagan, Yuval Filmus and Steve Hanneke for many helpful discussions. In particular, Yuval Dagan suggested the argument in the last step of the proof of Theorem~\ref{thm:31}; Steve Hanneke communicated his result mentioned in Remark~\ref{com:33}; the application presented in Subsection~\ref{sec:42} came up in discussions among some of the above-mentioned people.

\bibliographystyle{plain}
\bibliography{minimax}

\end{document}